%% file: arxiv_merged.tex
\documentclass[journal]{IEEEtran}

\usepackage{amsmath,amssymb,amsfonts,amsthm}
\usepackage{mathtools}
\usepackage{graphicx}
\usepackage{textcomp}
\usepackage{xcolor}
\usepackage[noadjust]{cite}
\usepackage{booktabs}
\usepackage{algorithm}
\usepackage{algpseudocode}
\usepackage[bookmarks,colorlinks]{hyperref}

\usepackage{acronym}
\usepackage{enumitem}
\IfFileExists{myshorts.tex}{\input{myshorts}}{}
\IfFileExists{Preamble.tex}{\input{Preamble.tex}}{%
  \acrodef{gsp}[GSP]{graph signal processing}%
  \acrodef{gft}[GFT]{graph Fourier transform}%
  \acrodef{evd}[EVD]{eigenvalue decomposition}%
}

\providecommand{\define}{\triangleq}
\providecommand{\tr}{\operatorname{tr}}
\providecommand{\diag}{\operatorname{diag}}
\providecommand{\rank}{\operatorname{rank}}
\providecommand{\range}{\operatorname{range}}

\providecommand{\argmin}{\operatorname*{arg\,min}}
\providecommand{\odot}{\mathbin{\odot}}
\providecommand{\oslash}{\mathbin{\oslash}}
\providecommand{\onevec}{\mathbf 1}
\providecommand{\xvec}{\mathbf x}
\providecommand{\avec}{\mathbf a}
\providecommand{\yvec}{\mathbf y}
\providecommand{\wvec}{\mathbf w}

\providecommand{\hvec}{\mathbf h}

\providecommand{\Lmat}{\mathbf L}
\providecommand{\Vmat}{\mathbf V}
\providecommand{\Wmat}{\mathbf W}
\providecommand{\Lambdamat}{\boldsymbol\Lambda}
\providecommand{\Imat}{\mathbf I}

\providecommand{\Amat}{\mathbf A}
\providecommand{\Bmat}{\mathbf B}
\providecommand{\Cmat}{\mathbf C}
\providecommand{\Hmat}{\mathbf H}

\providecommand{\Gmat}{\mathbf G}

\providecommand{\Dmat}{\mathbf D}

\providecommand{\mubx}{\boldsymbol\mu_x}
\providecommand{\muby}{\boldsymbol\mu_y}
\providecommand{\Cxx}{\mathbf C_{xx}}
\providecommand{\Cxy}{\mathbf C_{xy}}
\providecommand{\Cyy}{\mathbf C_{yy}}
\providecommand{\alphavec}{\boldsymbol\alpha}
\providecommand{\etavec}{\boldsymbol\eta}
\providecommand{\omegavec}{\boldsymbol\omega}

\providecommand{\nubf}{\boldsymbol\nu}

\providecommand{\Ccal}{\mathcal C}

\providecommand{\Rcal}{\mathcal R}
\providecommand{\Ncal}{\mathcal N}

\newcommand{\muxm}[1]{\boldsymbol\mu_x^{(#1)}}
\newcommand{\muym}[1]{\boldsymbol\mu_y^{(#1)}}
\newcommand{\Cxxm}[1]{\mathbf C_{xx}^{(#1)}}
\newcommand{\Cxym}[1]{\mathbf C_{xy}^{(#1)}}
\newcommand{\Cyym}[1]{\mathbf C_{yy}^{(#1)}}
\newcommand{\dxym}[1]{\mathbf d_{xy}^{(#1)}}
\newcommand{\dyym}[1]{\mathbf d_{yy}^{(#1)}}
\newcommand{\fstar}[1]{\mathbf f_{#1}^{\star}}
\newcommand{\muxj}[1]{\boldsymbol\mu_{x,#1}}
\newcommand{\muyj}[1]{\boldsymbol\mu_{y,#1}}
\newcommand{\Cxxj}[1]{\mathbf C_{xx,#1}}
\newcommand{\Cxyj}[1]{\mathbf C_{xy,#1}}
\newcommand{\Cyyj}[1]{\mathbf C_{yy,#1}}
\newcommand{\Astar}{\mathbf A_{\rm L}^{\star}}

\newcommand{\SmatD}{\mathbf S_{\rm D}}
\newcommand{\rvecD}{\mathbf r_{\rm D}}

\newtheorem{theorem}{Theorem}
\newtheorem{proposition}{Proposition}
\newtheorem{corollary}{Corollary}

\theoremstyle{remark}
\newtheorem{remark}{Remark}

\title{Graph Signal Estimation With Unknown Topology}

\author{Tirza Routtenberg, \IEEEmembership{Senior Member, IEEE}%
\thanks{The author is with the School of Electrical and Computer Engineering and with the Goldman Sonnenfeldt School of Sustainability and Climate Change
Ben-Gurion University of the Negev, Beer-Sheva 84105, Israel, e-mail: tirzar@bgu.ac.il. Parts of this work were presented at the CISS Conference, 2025~\cite{Zeltzer_Routtenberg_CISS}. This research was supported by the ISRAEL SCIENCE FOUNDATION (Grant No. 1148/22).}
\vspace{-0.5cm}}

\providecommand{\tr}{\operatorname{tr}}
\providecommand{\diag}{\operatorname{diag}}
\providecommand{\argmin}{\operatorname*{arg\,min}}
\providecommand{\argmax}{\operatorname*{arg\,max}}
\providecommand{\xvec}{\mathbf x}
\providecommand{\yvec}{\mathbf y}
\providecommand{\hvec}{\mathbf h}
\providecommand{\bvec}{\mathbf b}

\providecommand{\Lmat}{\mathbf L}
\providecommand{\Vmat}{\mathbf V}
\providecommand{\Imat}{\mathbf I}
\providecommand{\Amat}{\mathbf A}
\providecommand{\Hmat}{\mathbf H}
\providecommand{\Kmat}{\mathbf K}
\providecommand{\Gmat}{\mathbf G}
\providecommand{\Wmat}{\mathbf W}
\providecommand{\Dmat}{\mathbf D}
\providecommand{\mubx}{\boldsymbol\mu_x}
\providecommand{\muby}{\boldsymbol\mu_y}
\providecommand{\Cxx}{\mathbf C_{xx}}
\providecommand{\Cxy}{\mathbf C_{xy}}
\providecommand{\Cyy}{\mathbf C_{yy}}
\providecommand{\omegavec}{\boldsymbol\omega}
\providecommand{\alphavec}{\boldsymbol\alpha}
\providecommand{\nubf}{\boldsymbol\nu}
\providecommand{\etavec}{\boldsymbol\eta}

\providecommand{\Rcal}{\mathcal R}
\providecommand{\Ncal}{\mathcal N}
\providecommand{\muxm}[1]{\boldsymbol\mu_x^{(#1)}}
\providecommand{\muym}[1]{\boldsymbol\mu_y^{(#1)}}
\providecommand{\Cxxm}[1]{\mathbf C_{xx}^{(#1)}}
\providecommand{\Cxym}[1]{\mathbf C_{xy}^{(#1)}}
\providecommand{\Cyym}[1]{\mathbf C_{yy}^{(#1)}}
\providecommand{\dxym}[1]{\mathbf d_{xy}^{(#1)}}
\providecommand{\dyym}[1]{\mathbf d_{yy}^{(#1)}}
\providecommand{\fstar}[1]{\mathbf f_{#1}^{\star}}
\providecommand{\SmatD}{\mathbf S_{\rm D}}
\providecommand{\rvecD}{\mathbf r_{\rm D}}
\providecommand{\spmx}[1]{\widehat{\muxm{#1}}}
\providecommand{\spmy}[1]{\widehat{\muym{#1}}}
\providecommand{\spCxx}[1]{\widehat{\Cxxm{#1}}}
\providecommand{\spCxy}[1]{\widehat{\Cxym{#1}}}
\providecommand{\spCyy}[1]{\widehat{\Cyym{#1}}}
\providecommand{\spdxy}[1]{\widehat{\dxym{#1}}}
\providecommand{\spdyy}[1]{\widehat{\dyym{#1}}}
\providecommand{\spMSE}{\mathrm{MSE}}

\begin{document}
\maketitle

\begin{abstract}
Graph-signal estimation is commonly designed assuming that the underlying topology is known. In this paper, we consider estimation from general nonlinear measurements when the active topology is latent but is known to belong to a given family of candidate graphs. We first develop a class of topology-blind graph-spectral estimators that jointly exploit the candidate graphs, and characterize the performance loss compared to unconstrained   linear estimation. We derive the minimum-\ac{mse} multi-basis estimator and show that its optimal operator is the weighted projection of the topology-blind \ac{lmmse} operator onto the multi-graph filter space. This characterization provides exact expressions for the excess MSE and conditions for attaining the topology-blind
\ac{lmmse}. To reduce the number of parameters, we develop estimators based on shared spectral responses and low-rank representations, and consider data-dependent
estimators that infer the topology from the measurements, by \ac{map} selection or
posterior weighting (multiple-model).
We further extend the framework to networks
of different sizes by jointly optimizing a common
spectral function. We derive its MSE-optimal
coefficients and decompose the resulting estimation
loss into graph-filter approximation error and
the additional cost of sharing a spectral response
across networks. The proposed estimators admit
sample-based implementations, while the shared
spectral function can be applied to unseen graphs
without retraining its coefficients. 
Simulations on IEEE power networks demonstrate
the benefits of the proposed graph-spectral
estimators under topology uncertainty and
cross-network transfer.
\end{abstract}

\begin{IEEEkeywords}
Graph signal processing, linear estimation, topology uncertainty,
transferability, nonlinear measurement model,
power system state estimation, multiple-model estimation
\end{IEEEkeywords}

\acresetall

\vspace{-0.5cm}
\section{Introduction}
\Ac{gsp} provides a framework for processing signals
defined over networks by exploiting their graph structure~\cite{Shuman_Ortega_2013,ortega2018graph}.
A fundamental task is graph-signal estimation from noisy
or incomplete measurements \cite{7117446,kroizer2021bayesian}.
In many networked systems, the topology governs interactions
among node variables, often through nonlinear models.
Examples include coupled-oscillator and synchronization
dynamics~\cite{Dorfler_Bullo_Automatica_2014,Rodrigues_PhysRep_2016}, 
and power-system state estimation, where the topology
determines the nonlinear power-flow
equations~\cite{Giannakis_Wollenberg_2013,2021Anna}.
Most graph-based estimation methods assume a known topology.
However, switching operations, component failures, or connectivity changes may render the active topology unknown, degrading the performance of estimators designed for a fixed graph. Furthermore, physical networks often obey common laws despite differences in topology and size, creating opportunities to exploit information across networks to improve estimation and enable transfer across graph topologies and dimensions.

Motivated by these considerations, we study graph-signal
estimation under topology uncertainty and the transfer
of graph-based estimators across networks. We first
consider the setting in which the active topology is
unknown at inference time but belongs to a known
family of candidate graphs.
 In this setting, conventional graph filters designed for a single known topology cannot be directly applied without selecting a particular graph. Moreover, even when the
candidate topologies and their probabilities are known,
the optimal topology-blind linear estimator is generally not a graph filter of any individual candidate graph, i.e., it is not diagonalized by any single graph Fourier basis.
The main challenge is to exploit the spectral information of multiple candidate topologies while preserving the structural and computational advantages of graph filtering. A second challenge arises when an estimator should be shared by networks of different sizes, or applied to a network unseen in training: graph-filter coefficients defined per eigenvector do not carry over across dimensions, so the spectral response itself must be parameterized in a size-independent manner.

Graph-signal estimation has been extensively studied
under the assumption of a known graph topology.
In particular, the \ac{gsp}-linear minimum \ac{mse} estimator
in~\cite{kroizer2021bayesian} exploits graph-spectral
structure for Bayesian estimation, including nonlinear
observation models. 
This framework has been extended to graph-spectral
\ac{map} estimation~\cite{sagi2024}, widely-linear estimation
of complex-valued graph signals~\cite{Amar_Routtenberg},
and tracking
dynamic graph signals~\cite{Buchnik_GSPKalmanNet_2024}.
In our preliminary work~\cite{Zeltzer_Routtenberg_CISS},
we considered shared-response estimation of same-size
networks with the active topology known at inference time.
In contrast, the present work develops new estimators
for the entire candidate family when the active topology
is unknown, provides a detailed analysis of their
performance and properties, extends the framework
to networks of different sizes, and includes broader
comparisons with existing methods.
The effects of topology uncertainty and graph errors on \ac{gsp}
have also been investigated in graph filtering over
random graphs~\cite{Isufi2017random},
signal processing under graph perturbations
~\cite{Ceci_Barbarossa_2020,Miettinen_Vorobyov_Ollila_2021},
joint signal and graph perturbation
~\cite{Ceci2018TLS}, and robust graph-filter
identification~\cite{Rey2023robust}.
Estimation under model uncertainty has also been
addressed through multiple-model
estimation~\cite{BlomBarShalom1988,Magill1965} and robust
minimax designs~\cite{Eldar2005robust}.
However, these approaches do not explicitly exploit
the graph-spectral structure of candidate topologies.
Another line of research addresses unknown topology through graph learning, topology identification, and detection of edge disconnections from graph
signals~\cite{7990048,Dong_Vandergheynst_2016,
kalofolias2016learn,Shaked2021}.
Related methods have been developed for topology
identification in power networks
\cite{Deka_Chertkov_Backhaus_2017,
Kekatos_Giannakis_Baldick2016,2018arXiv180304506C},
joint topology and state estimation
\cite{Grotas_2019,Karimi_Natarajan_2021},
and joint graph learning and signal reconstruction
\cite{ramezani2019graph,cao2027exogenous}.
However, these approaches do not directly address the design of computationally efficient graph-spectral estimators that exploit multiple possible topologies under general nonlinear observation models.

Another related research direction concerns the
transferability of graph-based processing methods
across networks.
Graphon signal processing provides a framework
for relating graph operators across networks of
different sizes~\cite{Ruiz_Chamon_Ribeiro_GSP_2021},
while the stability and transferability of \acp{gnn} have been investigated
in~\cite{Isufi_Ribeiro,Ruiz_Chamon_Ribeiro_Transfer}.
However, the design of graph-spectral estimators
that explicitly optimize estimation performance
across networks with different topologies and sizes
remains largely unexplored.

In this paper, we study graph-signal estimation
under unknown topology and transfer across networks
of different sizes. For general, possibly nonlinear
observation models, we develop structured affine
estimators that exploit graph-spectral information.
The main contributions are as follows:
\begin{enumerate}
\item \textit{Multi-basis estimation under topology uncertainty:}
We derive the minimum-\ac{mse} multi-basis graph
filter that jointly exploits the spectral bases
of candidate topologies. We establish its
weighted-projection interpretation, characterize
its excess \ac{mse} relative to the topology-blind
\ac{lmmse} estimator, and derive conditions for
linear optimality and performance bounds.

\item \textit{Structured and topology-adaptive estimators:}
We develop lower-complexity shared-response and
low-rank estimators and characterize the \ac{mse}
loss due to their structural constraints.
We further develop data-dependent estimators
that adapt to the topology through measurement-based
\ac{map} selection or posterior weighting.

\item \textit{Transfer across network sizes:}
We develop a shared spectral-function estimator
jointly optimized across networks of different
dimensions. We derive its minimum-\ac{mse}
coefficients, decompose the estimation loss into
graph-filter approximation and cross-network sharing
errors, and provide a spectral-fitting interpretation
and an eigendecomposition-free implementation.

\item \textit{Sample-based design and evaluation:}
We derive empirical-\ac{mse} designs for all proposed
estimators and characterize their parameterization
complexity. We evaluate them on IEEE power systems
against robust minimax, expected-graph, classical
multiple-model, and deep-learning estimators, quantifying
the benefits of graph-spectral structure, topology
information, and cross-network transfer.
\end{enumerate}

The remainder of this paper is organized as follows.
Section~\ref{sec:model} introduces the problem formulation and
establishes the estimation benchmarks under topology
uncertainty. Section~\ref{sec:multibasis} develops the topology-blind
structured estimators and characterizes their
performance. Section~\ref{sec:crosssize} presents the transferable
graph-spectral estimator for networks of different
sizes. Section~\ref{sec:samples} develops sample-based implementations.
Section~\ref{sec:simulations} presents the simulations, and Section \ref{conc_subsec} concludes the paper.

\noindent\textbf{Notation:}
Boldface lowercase and uppercase letters denote vectors and matrices.
The operators $(\cdot)^T$, $(\cdot)^{-1}$, and
$(\cdot)^\dagger$ denote transpose, inverse, and
Moore--Penrose pseudoinverse.
The symbols $\odot$ and $\oslash$ denote element-wise
multiplication and division, while $\diag(\cdot)$
forms a diagonal matrix from a vector or extracts
the diagonal of a matrix.
The symbols $\mathbf{1}$ and $\Imat_N$ denote the
all-ones vector and $N\times N$ identity matrix.
The set $\mathbb S_{++}^N$ contains real symmetric
positive-definite $N\times N$ matrices.
The operators $\range(\Amat)$ and $\Ncal(\Amat)$
denote the range and null space of $\Amat$.
The symbols $\tr(\cdot)$ and $\mathbb E[\cdot]$
denote trace and expectation, while $\|\cdot\|$
and $\|\cdot\|_{\rm F}$ denote the Euclidean and
Frobenius norms.
For $\Amat\in\mathbb R^{p\times q}$ and
$\Cmat\in\mathbb S_{++}^{q}$, define
$\|\Amat\|_{\Cmat}^2
\define \tr(\Amat\Cmat\Amat^T)$.

\section{Problem Formulation and GSP Background}\label{sec:model}
In this section, we first 
provide the necessary \ac{gsp} background in Subsection~\ref{subsection_GSP}, and introduce the observation model with an
unknown topology in Subsection~\ref{subsection_problem_formulation}.  Subsection~\ref{subsec:price} formulates the estimation objective and
establishes benchmark performance measures.

\vspace{-0.25cm}
\subsection{Graph Signal Processing Background}
\label{subsection_GSP}
An undirected weighted graph $\mathcal{G}(\mathcal{V},\mathcal{E},\Wmat)$ with
$N=|\mathcal{V}|$ nodes is represented by its graph Laplacian
\begin{equation}
    \Lmat=\diag(\Wmat\onevec)-\Wmat
    =\Vmat\Lambdamat\Vmat^T,
    \label{eq:laplacian_evd}
\end{equation}
where the columns of $\Vmat$ are the Laplacian
orthogonal eigenvectors, and
$\Lambdamat=\diag(\lambda_1,\ldots,\lambda_N)$ contains the
eigenvalues, referred to as graph frequencies.  For a graph signal $\avec\in\mathbb R^N$, the \ac{gft} and its
inverse are \cite{Shuman_Ortega_2013,ortega2018graph}
\begin{equation}
    \widetilde{\avec}=\Vmat^T\avec,
    \qquad
    \avec=\Vmat\widetilde{\avec}.
    \label{eq:gft_background}
\end{equation}
A Laplacian-based linear shift-invariant graph filter is defined as
\begin{equation}
    f(\Lmat)=\Vmat f(\Lambdamat)\Vmat^T,
    \label{eq:graph_filter_background}
\end{equation}
where $f(\lambda)$ is its graph-frequency response
\cite{Shuman_Ortega_2013,ortega2018graph}. In particular, polynomial graph filters can
be written as
\begin{equation}
    f(\Lmat)=\sum_{q=0}^{Q-1}\alpha_q\Lmat^q.
    \label{eq:polynomial_filter_background}
\end{equation}
Such
representations avoid an explicit \ac{evd} and admit localized implementations.

\subsection{Signal Model with Unknown Topology}
\label{subsection_problem_formulation}
We consider $M$ candidate undirected weighted graphs on the same set of $N$ nodes,
with Laplacian matrices
\begin{equation}
    \Lmat_m=\Vmat_m\Lambdamat_m\Vmat_m^T,
    \label{eq:evd}
\end{equation}
where $\Vmat_m^T\Vmat_m=\Imat_N$ and  $\Lambdamat_m=
    \diag(\lambda_{m,1},\ldots,\lambda_{m,N})$.
We assume that the candidate Laplacian matrices have distinct eigenvalues, ordered as
$0=\lambda_{m,1}<\cdots<\lambda_{m,N}$.
The graph governing each realization is latent and is referred to as the
\emph{active topology}; its index is denoted by $m^\star$,  with
\begin{equation}
    \Pr(m^\star=m)=p_m>0,
    \qquad \sum_{m=1}^{M}p_m=1.
    \label{eq:topology_prior}
\end{equation}
We assume that the candidate family and the prior $\{\Lmat_m,p_m\}_{m=1}^{M}$ are known. We initially consider a common node set so that all candidate graphs act on
the same signal space. Section~\ref{sec:crosssize} removes this equal-dimension restriction.

Conditioned on $m^\star=m$, the measurement model is
\begin{equation}
    \yvec=\gvec(\Lmat_m,\xvec)+\wvec,
    \label{eq:nonlinear_model}
\end{equation}
where $\xvec\in\mathbb R^N$ is the random graph signal of interest, $\gvec$ is a known, generally nonlinear, topology-dependent measurement function, and $\wvec$ denotes additive measurement noise.
The model in \eqref{eq:nonlinear_model} provides a general representation for graph-based inverse and estimation problems~\cite{ortega2018graph,7117446}, and naturally includes network-dependent physical models such as power-system state estimation
\cite{2021Anna,Wang_Giannakis_Sidiropoulos2018,Monticelli1999}. The conditional distribution of $\xvec$ may also depend on $m$.  
Thus, topology uncertainty can affect both the forward model and the signal prior. 

The estimator $\hat{\xvec}$ observes $\yvec$ but not $m^\star$.   The topology-averaged \ac{mse} is
\be
\Rcal(\hat\xvec)
\define
\mathbb E\!\left[
\|\hat\xvec-\xvec\|^2
\right]
\hspace{-0.05cm}=\hspace{-0.05cm}
\sum_{m=1}^{M} p_m
\mathbb E\!\left[
\|\hat\xvec-\xvec\|^2
| m^\star=m
\right].
\label{eq:bayes_risk}
\ee
Thus, $\Rcal(\hat\xvec)$ is the Bayesian \ac{mse} under the topology prior
$\{p_m\}_{m=1}^{M}$.


For random vectors $\avec$ and $\bvec$ of appropriate dimensions, we use the notations 
$\boldsymbol\mu_a^{(m)}\define\mathbb E[\avec| m^\star=m]$ and 
\be
    \Cmat_{ab}^{(m)}
    \define
    \mathbb E\!\left[
    (\avec-\boldsymbol\mu_a^{(m)})
    (\bvec-\boldsymbol\mu_b^{(m)})^T
    | m^\star=m
    \right].
    \label{eq:conditional_moments}
\ee
Similarly, the corresponding unconditional moments under the topology prior are denoted as $\boldsymbol\mu_a
    =
    \sum_{m=1}^{M}p_m\boldsymbol\mu_a^{(m)}$
     and 
\be
    \Cmat_{ab}
    =
    \sum_{m=1}^{M}p_m
    \left[
    \Cmat_{ab}^{(m)}
    +
    (\boldsymbol\mu_a^{(m)}-\boldsymbol\mu_a)
    (\boldsymbol\mu_b^{(m)}-\boldsymbol\mu_b)^T
    \right].
    \label{eq:mixture_moments}
\ee
We assume finite second moments and $\Cyy\succ\mathbf 0$.

\subsection{Estimation Problem and Performance Benchmarks}
\label{subsec:price}


The estimator minimizing \eqref{eq:bayes_risk} over all measurable
functions of $\yvec$ is the topology-blind \ac{mmse} estimator
\begin{equation}
    \hat\xvec_{\rm U}^{\rm MMSE}
    =
    \sum_{m=1}^{M}
    \Pr(m^\star=m|\yvec)
    \mathbb E[\xvec|\yvec,m^\star=m],
    \label{eq:mmse_unknown}
\end{equation}
which follows the classical
multiple-model estimator~\cite{Magill1965}.
Since the quantities in \eqref{eq:mmse_unknown} are generally unavailable in closed form, affine estimation based on
first- and second-order statistics provides a common tractable alternative.

The optimal topology-blind affine estimator under
\eqref{eq:bayes_risk} is
\begin{equation}
    \hat\xvec_{\rm L}^\star
    =
    \mubx+\Astar(\yvec-\muby),
    \qquad
    \Astar=\Cxy\Cyy^{-1},
    \label{eq:global_lmmse}
\end{equation}
which is the topology-blind \ac{lmmse} estimator under the mixture
distribution, termed the linear minimum average MSE (LMAMSE) estimator in our preliminary work~\cite{Zeltzer_Routtenberg_CISS}.
Moreover, for any $\Amat\in\mathbb R^{N\times N}$, the \ac{lmmse}
orthogonality principle gives
\beqna
\mathbb E\!\left[
\|\mubx+\Amat(\yvec-\muby)-\xvec\|^2
\right]
=
\Rcal(\hat\xvec_{\rm L}^\star)+
\|\Amat-\Astar \|_{\Cyy}^2,
\label{eq:lmmse_projection}
\eeqna
where $\hat\xvec_{\rm L}^\star$ and $\Astar $ are given in \eqref{eq:global_lmmse}. The term in \eqref{eq:lmmse_projection} evaluates the \ac{mse} gap between a general linear estimator and the \ac{lmmse} estimator.

A second benchmark assumes knowledge of the active topology. If
$m^\star$ were revealed, the corresponding oracle \ac{lmmse} estimator would be
\beqna
    \hat\xvec_{\rm O}^{(m^\star)}(\yvec)
    & = &
    \muxm{m^\star}
    +
    \Amat_{\rm O}^{(m^\star)}
    (\yvec-\muym{m^\star}),
    \nonumber\\
    \Amat_{\rm O}^{(m^\star)}
    & \define &
    \Cxym{m^\star}
    \big(\Cyym{m^\star}\big)^{-1},
    \label{eq:conditional_lmmse}
\eeqna
where  we assume $\Cyym{m}\succ\zerovec$, $\forall m$.
By the conditional \ac{lmmse} orthogonality principle, the \ac{mse} gap between
the topology-blind \ac{lmmse} estimator and the oracle benchmark satisfies
\beqna
\Rcal(\hat\xvec_{\rm L}^\star)
-
\sum_{m=1}^{M}p_m
\mathbb E \big[
\|\hat\xvec_{\rm O}^{(m)}(\yvec)-\xvec\|^2
|\, m^\star=m
\big]\hspace{1.25cm}
\nonumber\\
=
\sum_{m=1}^{M}p_m
\mathbb E\!\big[
\|\hat\xvec_{\rm L}^\star(\yvec)
-\hat\xvec_{\rm O}^{(m)}(\yvec)\|^2
|\, m^\star=m
\big ]
\geq0.
\label{eq:linear_price}
\eeqna
Thus, \eqref{eq:linear_price} quantifies the \ac{mse} cost of not observing the active topology within affine estimation. In the following, $\Astar$ serves as the unstructured topology blind \ac{lmmse} benchmark.

\section{Topology-Blind Structured Graph-Signal Estimation}\label{sec:multibasis}
In this section, we introduce the multi-basis GSP-\ac{lmmse} estimator in Subsection~\ref{subsec:dictionary} and characterize its performance in Subsection~\ref{subsec:projection}. Subsection~\ref{subsec:shared_blind} develops lower-dimensional shared-response and low-rank variants. Finally, Subsection~\ref{subsec:topology_selection} considers complementary data-dependent estimators that select or weight the candidate topologies from each observation. 

\subsection{Multi-Basis GSP-\ac{lmmse} Estimator}
\label{subsec:dictionary}
For a known graph, the GSP-\ac{lmmse} estimator in
\cite{kroizer2021bayesian} is diagonal in the graph Fourier basis. Under topology uncertainty, however, no single candidate basis is known to be active. This motivates a topology-blind estimator that combines the spectral projectors of all candidate graphs.  

Using the $M$ candidate graph bases in \eqref{eq:evd}, consider the
topology-blind estimator
\begin{equation}
    \hat\xvec_{\rm MB}(\yvec)
    =
    \mubx+
    \sum_{m=1}^{M}
    \Vmat_m\diag(\hvec_m)\Vmat_m^T(\yvec-\muby),
    \label{eq:MB_estimator}
\end{equation}
where $\hvec_m\in\mathbb R^N$ is the spectral response associated with
the $m$th candidate graph. Unlike the data-dependent topology-selection approach developed in
Subsection~\ref{subsec:topology_selection},
\eqref{eq:MB_estimator} jointly combines all candidate graph bases
without committing to a single topology at inference.

Define $\SmatD\in\mathbb R^{MN\times MN}$ blockwise as
\begin{equation}
    [\SmatD]_{m\ell}
    \define
    (\Vmat_m^T\Cyy\Vmat_\ell)
    \odot
    (\Vmat_m^T\Vmat_\ell),
    ~~~ m,\ell=1,\ldots,M,
    \label{eq:S_blocks}
\end{equation}
and
\begin{equation}
    \rvecD
    \define[\diag(\Vmat_1^T\Cxy\Vmat_1)^T,\ldots,
\diag(\Vmat_M^T\Cxy\Vmat_M)^T]^T.
    \label{eq:r_blocks}
\end{equation}

\begin{theorem}[Multi-basis GSP-\ac{lmmse} estimator]
\label{thm:MB}
Among the topology-blind estimators \hspace{-0.02cm}\eqref{eq:MB_estimator}, the
minimum-\ac{mse} estimator is
\begin{equation}
    \hat\xvec_{\rm MB}^\star(\yvec)
    =
    \mubx+
    \sum_{m=1}^{M}
    \Vmat_m\diag(\hvec_m^\star)\Vmat_m^T
    (\yvec-\muby),
    \label{eq:opt_est}
\end{equation}
where the minimum-norm optimal coefficient vector is
\begin{equation}
    \etavec^\star
    \define
    [(\hvec_1^\star)^T,\ldots,(\hvec_M^\star)^T]^T
    =
    \SmatD^\dagger\rvecD.
    \label{eq:eta_solution}
\end{equation}
The corresponding minimum \ac{mse} is
\begin{equation}
    \Rcal\big(\hat\xvec_{\rm MB}^\star\big)
    =
    \tr(\Cxx)
    -
    \rvecD^T\SmatD^\dagger\rvecD.
    \label{eq:MB_minrisk}
\end{equation}
\end{theorem}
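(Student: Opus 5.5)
The estimator \eqref{eq:MB_estimator} is an affine estimator $\mubx+\Amat(\etavec)(\yvec-\muby)$ with operator $\Amat(\etavec)=\sum_{m=1}^{M}\Vmat_m\diag(\hvec_m)\Vmat_m^T$, which depends linearly on $\etavec=[\hvec_1^T,\ldots,\hvec_M^T]^T$. The proof therefore reduces to writing the \ac{mse} as a quadratic function of $\etavec$ and minimizing it.

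\emph{Expanding the \ac{mse}.} Since $\mathbb E[\yvec-\muby]=\mathbf 0$ and $\mathbb E[\xvec]=\mubx$ under the mixture moments \eqref{eq:mixture_moments}, we have
\[
\Rcal(\hat\xvec_{\rm MB})=\tr(\Cxx)-2\tr(\Amat\Cmat_{yx})+\tr(\Amat\Cyy\Amat^T).
\]
This is the same decomposition that underlies \eqref{eq:lmmse_projection}.

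\emph{The linear term.} Using $\Cmat_{yx}=\Cxy^T$ and the cyclic property of the trace,
\[
\tr\big(\Vmat_m\diag(\hvec_m)\Vmat_m^T\Cxy^T\big)=\hvec_m^T\diag(\Vmat_m^T\Cxy\Vmat_m).
\]
Summing over $m$ gives $\etavec^T\rvecD$, with $\rvecD$ as in \eqref{eq:r_blocks}.

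\emph{The quadratic term.} Here I use the identity $\tr(\diag(\mathbf a)\Bmat\diag(\mathbf b)\Cmat^T)=\mathbf a^T(\Bmat\odot\Cmat)\mathbf b$. Applying it to
\[
\tr\big(\Vmat_m\diag(\hvec_m)\Vmat_m^T\Cyy\Vmat_\ell\diag(\hvec_\ell)\Vmat_\ell^T\big)
=\tr\big(\diag(\hvec_m)(\Vmat_m^T\Cyy\Vmat_\ell)\diag(\hvec_\ell)(\Vmat_\ell^T\Vmat_m)\big),
\]
with $\Bmat=\Vmat_m^T\Cyy\Vmat_\ell$ and $\Cmat^T=\Vmat_\ell^T\Vmat_m$, i.e., $\Cmat=\Vmat_m^T\Vmat_\ell$, gives $\hvec_m^T[\SmatD]_{m\ell}\hvec_\ell$, matching \eqref{eq:S_blocks}. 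Hence
\[
\Rcal=\tr(\Cxx)-2\etavec^T\rvecD+\etavec^T\SmatD\etavec.
\]

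\emph{Minimization.} The key structural fact is that $\SmatD=\mathbf G^T(\Imat_N\otimes\Cyy)\mathbf G\succeq\mathbf 0$, where $\mathbf G$ is the linear map $\etavec\mapsto\mathrm{vec}(\Amat(\etavec))$. Similarly, $\rvecD=\mathbf G^T\mathrm{vec}(\Cxy)$. This is the step needing care. Writing $\mathrm{vec}(\Cxy)=(\Imat_N\otimes\Cyy)\mathrm{vec}(\Astar)$ shows that $\rvecD\in\range(\mathbf G^T(\Imat_N\otimes\Cyy)^{1/2})=\range(\SmatD)$.

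Consequently the convex quadratic attains its minimum, and the minimizers are exactly the solutions of the normal equations $\SmatD\etavec=\rvecD$. The minimum-norm solution is $\SmatD^\dagger\rvecD$, which yields \eqref{eq:eta_solution} and hence \eqref{eq:opt_est}.

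\emph{The minimum \ac{mse}.} Substituting $\etavec^\star=\SmatD^\dagger\rvecD$ and using $\SmatD\SmatD^\dagger\rvecD=\rvecD$ gives
\[
\Rcal\big(\hat\xvec_{\rm MB}^\star\big)=\tr(\Cxx)-2\rvecD^T\SmatD^\dagger\rvecD+\rvecD^T\SmatD^\dagger\rvecD=\tr(\Cxx)-\rvecD^T\SmatD^\dagger\rvecD,
\]
which is \eqref{eq:MB_minrisk}.

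The main obstacle is establishing the range condition $\rvecD\in\range(\SmatD)$. Without it, the pseudoinverse solution need not be optimal, because $\SmatD$ is typically singular: for instance, the constant eigenvector $\onevec/\sqrt N$ is shared by all candidate Laplacians, which creates redundancy across the bases. Expressing both $\SmatD$ and $\rvecD$ through the common map $\mathbf G$ resolves this. The remaining steps are routine trace manipulations.
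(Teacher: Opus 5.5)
Your proof is correct and has the same skeleton as the paper's proof in Appendix~\ref{AppA}. Both expand the MSE into $\tr(\Cxx)-2\etavec^T\rvecD+\etavec^T\SmatD\etavec$, note $\SmatD\succeq\zerovec$, establish $\rvecD\in\range(\SmatD)$, take the minimum-norm pseudoinverse solution, and substitute back.

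The only substantive difference is how the range condition is obtained. The paper argues indirectly: the MSE is nonnegative, hence bounded below, and this forces $\rvecD\perp\Ncal(\SmatD)$. You obtain it constructively from $\SmatD=\mathbf G^T(\Imat_N\otimes\Cyy)\mathbf G$ and $\rvecD=\mathbf G^T(\Imat_N\otimes\Cyy)\mathrm{vec}(\Astar)$. The paper's route is shorter. Yours makes explicit that $\rvecD$ is the Gram image of $\Astar$, which is essentially the weighted-projection statement later formalized in Theorem~\ref{thm:projection}.

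You also omit the paper's closing uniqueness remark, but it follows directly from your setup. Minimizers differ by elements of $\Ncal(\SmatD)$. Any $\etavec\in\Ncal(\SmatD)$ satisfies $\|\Amat(\etavec)\|_{\Cyy}^2=\etavec^T\SmatD\etavec=0$, so $\Amat(\etavec)=\zerovec$ because $\Cyy\succ\zerovec$. Hence the optimal estimator itself is unique, even though its coefficient representation is not.
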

\begin{proof}
The proof appears in Appendix~\ref{AppA}.
\end{proof}

The cross-basis blocks in \eqref{eq:S_blocks} account jointly for the
measurement covariance and spectral overlap between candidate graph
bases. Thus, \eqref{eq:MB_estimator} is jointly optimized across the
candidate graphs rather than obtained by independently designing and
combining single-graph filters.
In addition, the optimal estimator is unique, although its coefficient representation in \eqref{eq:eta_solution}
is not. In particular, for $M\geq2$, any $\{c_m\}_{m=1}^{M}$ satisfying
$\sum_{m=1}^{M}c_m=0$ can be used to replace
$\hvec_m$ by $\hvec_m+c_m\onevec$ without changing
\eqref{eq:MB_estimator}, since each graph basis contains the identity
component. Consequently,
\begin{equation}
    \dim\Ncal(\SmatD)\geq M-1.
    \label{eq:SD_nullity}
\end{equation}
Thus, \eqref{eq:eta_solution} selects the minimum-norm coefficient
representation, while $\Cyy\succ\zerovec$ guarantees uniqueness of the
resulting optimal estimator.

\subsection{Performance Characterization}
\label{subsec:projection}
An immediate consequence of the multi-basis construction is that it contains every topology-blind GSP estimator restricted to a single
candidate basis: for any $m$, set
$\hvec_\ell=\delta[\ell-m]\hvec$, $\ell=1,\ldots,M$. Thus,
\beqna
\Rcal(\hat\xvec_{\rm MB}^\star)
\leq\hspace{5.5cm}
\nonumber\\
\min_{m=1,\ldots,M}\;
\min_{\hvec}
\Rcal\!\left(
\mubx+
\Vmat_m\diag(\hvec)\Vmat_m^T
(\yvec-\muby)
\right).
\label{eq:single_basis_dominance}
\eeqna
Thus, at the population level, jointly exploiting the candidate graph
bases cannot perform worse than the best topology-blind single-basis
GSP estimator. 
The following result characterizes the performance loss induced by the
multi-basis graph-filter structure relative to the unconstrained
topology-blind \ac{lmmse} estimator and describes the achievability conditions. 
\begin{theorem}[MSE gap of the multi-basis GSP estimator]
\label{thm:projection}
The optimal multi-basis estimator satisfies
\beqna
    \Rcal\big(\hat\xvec_{\rm MB}^\star\big)
    -
    \Rcal\big(\hat\xvec_{\rm L}^\star\big)\hspace{4.5cm}
    \nonumber\\
    =
    \Big\|
    \sum_{m=1}^{M}
    \Vmat_m\diag(\hvec_m^\star)\Vmat_m^T
    -\Astar
    \Big\|_{\Cyy}^2.
    \label{eq:MB_gap}
\eeqna
Its induced operator is the $\Cyy$-weighted orthogonal projection of
$\Astar$ onto the multi-basis operator space. Consequently, it attains
the topology-blind \ac{lmmse} {\em{iff}}
\begin{equation}
\label{iff_A_star}
    \Astar
    =
    \sum_{m=1}^{M}
    \Vmat_m\diag(\hvec_m)\Vmat_m^T
\end{equation}
for some $\{\hvec_m\}_{m=1}^{M}$.
\end{theorem}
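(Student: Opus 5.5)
The plan is to reduce everything to the orthogonality identity \eqref{eq:lmmse_projection}. Define the multi-basis operator space
\[
\mathcal{S}\define\Big\{\textstyle\sum_{m=1}^{M}\Vmat_m\diag(\hvec_m)\Vmat_m^T:\ \hvec_m\in\mathbb R^N\Big\},
\]
which is a linear subspace of $\mathbb R^{N\times N}$. Every estimator of the form \eqref{eq:MB_estimator} is affine with offset $\mubx$ and linear operator $\Amat\in\mathcal S$. Applying \eqref{eq:lmmse_projection} to such an operator gives
\[
\Rcal(\hat\xvec_{\rm MB})=\Rcal(\hat\xvec_{\rm L}^\star)+\|\Amat-\Astar\|_{\Cyy}^2 .
\]
Hence minimizing the MSE over \eqref{eq:MB_estimator} is the same as minimizing $\|\Amat-\Astar\|_{\Cyy}^2$ over $\Amat\in\mathcal S$.

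Since $\Cyy\succ\zerovec$, the form $\langle \Amat,\Bmat\rangle_{\Cyy}\define\tr(\Amat\Cyy\Bmat^T)$ is an inner product on $\mathbb R^{N\times N}$, with induced norm $\|\cdot\|_{\Cyy}$. The minimizer over the finite-dimensional subspace $\mathcal S$ therefore exists and is unique: it is the $\Cyy$-weighted orthogonal projection of $\Astar$ onto $\mathcal S$. By Theorem~\ref{thm:MB}, this minimizer is attained by the operator $\sum_m\Vmat_m\diag(\hvec_m^\star)\Vmat_m^T$ built from $\etavec^\star=\SmatD^\dagger\rvecD$. Substituting it into the displayed identity yields \eqref{eq:MB_gap}.

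To make the projection claim self-contained rather than relying only on uniqueness of the minimizer, I will also check the normal equations directly. Orthogonality requires $\langle \Amat^\star-\Astar,\Vmat_m\mathbf E_{ii}\Vmat_m^T\rangle_{\Cyy}=0$ for every spanning element of $\mathcal S$, where $\mathbf E_{ii}$ is the $i$th diagonal unit matrix. The term involving $\Astar$ reduces as follows:
\[
\tr(\Astar\Cyy\Vmat_m\mathbf E_{ii}\Vmat_m^T)=[\Vmat_m^T\Cxy\Vmat_m]_{ii},
\]
using $\Astar\Cyy=\Cxy$, and this is exactly the corresponding entry of $\rvecD$. The term involving $\Amat^\star$ expands, via $\tr(\diag(\hvec)\mathbf B\,\mathbf E_{ii}\mathbf C)$-type identities, into $[\SmatD\etavec]$, with the Hadamard structure of \eqref{eq:S_blocks} appearing. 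The normal equations are therefore $\SmatD\etavec=\rvecD$. This system is consistent because $\rvecD\in\range(\SmatD)$, being the Gram system of a projection, and the pseudoinverse solution satisfies it.

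The ``iff'' statement follows at once. The gap \eqref{eq:MB_gap} is zero exactly when the projection of $\Astar$ equals $\Astar$, which happens exactly when $\Astar\in\mathcal S$, which is condition \eqref{iff_A_star}. Positive-definiteness of $\Cyy$ is what makes a zero norm force equality of the operators. I expect the only genuinely fiddly step to be the index bookkeeping that turns the normal equations into the Hadamard-product blocks of \eqref{eq:S_blocks}. Even that is not strictly needed here, since Theorem~\ref{thm:MB} already certifies optimality; the core argument is just \eqref{eq:lmmse_projection} combined with the projection theorem in the $\Cyy$-inner-product space.
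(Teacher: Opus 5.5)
Your proposal is correct and follows essentially the same route as the paper: apply the orthogonality identity \eqref{eq:lmmse_projection} to the multi-basis class, identify MSE minimization with the $\Cyy$-weighted orthogonal projection of $\Astar$ onto the multi-basis operator space, and use $\Cyy\succ\mathbf 0$ to turn a zero gap into the operator equality \eqref{iff_A_star}. Your extra normal-equation check is accurate but not needed for the theorem, since the paper's proof takes the optimizer from Theorem~\ref{thm:MB} as given: the check recovers $\SmatD\etavec=\rvecD$ with the Hadamard blocks of \eqref{eq:S_blocks}, and your Gram-system argument for consistency is cleaner than the boundedness argument in Appendix~\ref{AppA}.
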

\begin{proof}
Applying \eqref{eq:lmmse_projection} to the multi-basis class \eqref{eq:MB_estimator} gives
\eqref{eq:MB_gap} with $\hvec_m$. Minimizing the \ac{mse} is therefore equivalent to
minimizing the $\Cyy$-weighted distance from $\Astar$ over the
admissible multi-basis operators, which proves the projection statement.
If \eqref{iff_A_star} holds, then $\Astar$ belongs to this class and its
projection is itself, so the \ac{mse} gap is zero. Conversely, if the gap is
zero, $\Cyy\succ\zerovec$ implies that the optimal multi-basis operator
equals $\Astar$, and hence \eqref{iff_A_star} holds.
\end{proof}
Equation~\eqref{eq:MB_gap} shows that the performance loss induced by
the multi-basis constraint is the squared $\Cyy$-weighted
distance between the topology-blind \ac{lmmse} operator and the multi-basis
operator space. The loss therefore vanishes precisely when $\Astar$
admits the representation in \eqref{iff_A_star}. In particular, since every 
multi-basis operator is symmetric, symmetry of $\Astar$ is necessary
for attaining the topology-blind \ac{lmmse}.

\begin{corollary}[Sufficient condition for attaining the \ac{lmmse}]
\label{cor:linear_opt}
Suppose that 1) the conditional means are topology independent;
2) the candidate topologies share an eigenbasis $\Vmat$; and
3) $\Vmat^T\Cxym{m}\Vmat$ and
$\Vmat^T\Cyym{m}\Vmat$ are diagonal for every $m$.
Then the multi-basis GSP-\ac{lmmse} estimator attains the topology-blind
\ac{lmmse}.
\end{corollary}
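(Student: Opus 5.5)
By Theorem~\ref{thm:projection}, it suffices to show that $\Astar$ admits the representation \eqref{iff_A_star}. Since the bases coincide, the natural choice is $\hvec_1=\diag(\Vmat^T\Astar\Vmat)$ with $\hvec_m=\zerovec$ for $m\geq2$, or any split of this vector across the $M$ identical bases. The problem therefore reduces to showing that $\Vmat^T\Astar\Vmat$ is diagonal, i.e., that $\Astar=\Cxy\Cyy^{-1}$ is a graph filter in the common basis $\Vmat$.

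The first step uses condition 1). The conditional means $\muxm{m}=\mubx$ and $\muym{m}=\muby$ do not depend on $m$, so the rank-one correction terms in \eqref{eq:mixture_moments} vanish. This leaves $\Cxy=\sum_m p_m\Cxym{m}$ and $\Cyy=\sum_m p_m\Cyym{m}$. If the means depended on $m$, these corrections $(\muxm{m}-\mubx)(\muym{m}-\muby)^T$ would generally not be diagonalized by $\Vmat$. This is why condition 1) is needed, and it is the only place it enters.

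The second step uses condition 3). The matrices $\Vmat^T\Cxy\Vmat=\sum_m p_m\Vmat^T\Cxym{m}\Vmat\define\Dmat_{xy}$ and $\Vmat^T\Cyy\Vmat\define\Dmat_{yy}$ are then convex combinations of diagonal matrices, hence diagonal. Moreover, $\Dmat_{yy}\succ\zerovec$ because $\Cyy\succ\zerovec$ and $\Vmat$ is orthogonal. Since $\Vmat$ is orthogonal, $\Cyy^{-1}=\Vmat\Dmat_{yy}^{-1}\Vmat^T$, and therefore
\begin{equation*}
\Astar=\Vmat\Dmat_{xy}\Vmat^T\Vmat\Dmat_{yy}^{-1}\Vmat^T=\Vmat\,\Dmat_{xy}\Dmat_{yy}^{-1}\Vmat^T .
\end{equation*}
This has the form $\Vmat\diag(\hvec)\Vmat^T$ with $\hvec=\diag(\Dmat_{xy})\oslash\diag(\Dmat_{yy})$.

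Finally, condition 2) gives $\Vmat_m=\Vmat$ for every $m$, possibly after reordering and sign changes of the columns. Sign flips do not affect $\Vmat_m\diag(\hvec)\Vmat_m^T$, and a reordering is absorbed by permuting the entries of $\hvec$. Hence \eqref{iff_A_star} holds with the choice above, and Theorem~\ref{thm:projection} gives zero MSE gap, so $\Rcal(\hat\xvec_{\rm MB}^\star)=\Rcal(\hat\xvec_{\rm L}^\star)$.

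No step is a real obstacle. The only points requiring care are the role of the mixture-mean terms in the first step and the sign and permutation ambiguity of the shared eigenbasis in the last step.
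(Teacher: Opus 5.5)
Your proof is correct and follows essentially the same route as the paper: condition 1) removes the between-topology mean terms, conditions 2)--3) make $\Cxy$, $\Cyy$ (hence $\Cyy^{-1}$ and $\Astar$) diagonal in $\Vmat$, and Theorem~\ref{thm:projection} then gives zero gap. Your handling of column permutations and sign flips of the shared basis is a small bookkeeping step that the paper leaves implicit.
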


\begin{proof}
The proof appears in Appendix~\ref{App_b}.
\end{proof}
In this case, all candidate graph-filter classes share the same
eigenbasis, so the multi-basis operator class reduces to a single
common-basis graph-filter class. For symmetric graph Laplacians, such
a common eigenbasis exists {\em{iff}} the Laplacians commute
pairwise~\cite{HornJohnsonMatrixAnalysis,Bronstein2014Making}.
In general, however, diagonal conditional moments in different graph bases do not imply this property, since the mixture moments need not share any of these bases. The following corollary considers a complementary setting in which the
oracle LMMSE operator associated with each candidate topology is itself
a graph filter.

\begin{corollary}[Effect of covariance heterogeneity]
\label{cor:heterogeneity}
Assume that the conditional means are topology independent and that the oracle LMMSE operator corresponding to
$m^\star=m$ is a graph filter \ac{wrt} $\Lmat_m$, $\forall m=1,\ldots,M$,  i.e.,
\begin{equation}
    \Amat_{\rm O}^{(m)}
    =
    \Vmat_m\diag(\hvec_m^{\rm O})\Vmat_m^T
    \label{eq:topology_lmmse_graph_filter}
\end{equation}
for some $\hvec_m^{\rm O}\in\mathbb R^N$. Then
\beqna
\Rcal(\hat\xvec_{\rm MB}^\star)
-\Rcal(\hat\xvec_{\rm L}^\star)
\hspace{4.5cm}
\nonumber\\
\quad\leq
\Big\|
\sum_{m=1}^{M}p_m
\Amat_{\rm O}^{(m)}
\big(\Cyym{m}-\Cyy\big)
\Cyy^{-1/2}
\Big\|_{\rm F}^{2}.
\label{eq:heterogeneity_bound}
\eeqna
\end{corollary}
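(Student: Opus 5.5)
We plan to bound the gap of Theorem~\ref{thm:projection} by comparing with a convenient feasible multi-basis operator rather than the optimal one. By \eqref{eq:MB_gap} and the projection property, for any admissible choice $\{\hvec_m\}$,
\begin{equation*}
\Rcal(\hat\xvec_{\rm MB}^\star)-\Rcal(\hat\xvec_{\rm L}^\star)\leq \Big\|\sum_{m=1}^{M}\Vmat_m\diag(\hvec_m)\Vmat_m^T-\Astar\Big\|_{\Cyy}^2 .
\end{equation*}
The natural candidate is $\hvec_m=p_m\hvec_m^{\rm O}$. By \eqref{eq:topology_lmmse_graph_filter}, this yields the operator $\sum_m p_m\Amat_{\rm O}^{(m)}$, which lies in the multi-basis class. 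Minimality of the projection then lets us replace the optimal operator with this averaged oracle operator.

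Next we express $\Astar$ through the oracle operators. Because the conditional means are topology independent, the mean-spread terms in \eqref{eq:mixture_moments} vanish. Hence
\begin{equation*}
\Cxy=\sum_m p_m\Cxym{m}=\sum_m p_m\Amat_{\rm O}^{(m)}\Cyym{m},
\end{equation*}
where we used $\Cxym{m}=\Amat_{\rm O}^{(m)}\Cyym{m}$ from \eqref{eq:conditional_lmmse}. Therefore $\Astar=\sum_m p_m\Amat_{\rm O}^{(m)}\Cyym{m}\Cyy^{-1}$, and
\begin{equation*}
\sum_m p_m\Amat_{\rm O}^{(m)}-\Astar=-\sum_m p_m\Amat_{\rm O}^{(m)}\big(\Cyym{m}-\Cyy\big)\Cyy^{-1}.
\end{equation*}

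Finally we convert the weighted norm. For $\Bmat=\Dmat\Cyy^{-1}$, where $\Dmat\define\sum_m p_m\Amat_{\rm O}^{(m)}(\Cyym{m}-\Cyy)$, we have
\begin{equation*}
\|\Bmat\|_{\Cyy}^2=\tr(\Dmat\Cyy^{-1}\Cyy\Cyy^{-1}\Dmat^T)=\|\Dmat\Cyy^{-1/2}\|_{\rm F}^2 .
\end{equation*}
This is exactly the right-hand side of \eqref{eq:heterogeneity_bound}.

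The only delicate step is justifying that the averaged oracle operator is admissible, since the coefficients $p_m\hvec_m^{\rm O}$ are a valid stacked vector $\etavec$. The rest is algebra, with $\Cyy\succ\zerovec$ ensuring that $\Cyy^{-1/2}$ exists. Consequently, I do not expect a real obstacle; the key is recognizing that the averaged oracle operator is the right feasible point.
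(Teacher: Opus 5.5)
Your proposal is correct and follows the paper's proof essentially step for step. Like the paper, you use the averaged oracle operator $\sum_m p_m\Amat_{\rm O}^{(m)}$ (coefficients $p_m\hvec_m^{\rm O}$) as a feasible multi-basis point in the projection bound. You then rewrite $\Astar=\big(\sum_m p_m\Amat_{\rm O}^{(m)}\Cyym{m}\big)\Cyy^{-1}$ via $\Cxym{m}=\Amat_{\rm O}^{(m)}\Cyym{m}$ under common means, and convert the $\Cyy$-weighted norm into the Frobenius norm with $\Cyy^{-1/2}$.
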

\begin{proof}
The proof appears in Appendix~\ref{App_B2}.
\end{proof}

Corollary~\ref{cor:heterogeneity} shows that the excess
\ac{mse} is bounded by covariance heterogeneity across
topologies. In particular, the bound is small when the
covariance-weighted deviations
$\Amat_{\rm O}^{(m)}(\Cyym{m}-\Cyy)$
are small or cancel in the weighted sum.

The unconstrained \ac{lmmse} in \eqref{eq:global_lmmse} is the affine-optimal
benchmark, but it uses a general $N\times N$ mapping and ignores graph
structure. The multi-basis GSP-\ac{lmmse} estimator exploits the candidate
graph spectra and provides a lower-dimensional design. In finite samples,
however,  its flexibility may increase estimation variance and
numerical sensitivity. This motivates
the lower-complexity structured estimators.

\vspace{-0.25cm}
\subsection{Shared-Response Estimator and Low-Rank Extension}
\label{subsec:shared_blind}
The multi-basis GSP-\ac{lmmse} estimator in \eqref{eq:opt_est} uses
$MN$ spectral coefficients. To reduce this complexity and exploit structural similarity among the
candidate graphs, we impose a separable structure on the
multi-basis responses,
\begin{equation}
    \hvec_m=\omega_m\hvec,\qquad m=1,\ldots,M,
    \label{eq:shared_response_constraint}
\end{equation}
where $\hvec\in\mathbb R^N$ is a common spectral response and
$\{\omega_m\}_{m=1}^M$ are graph-dependent weights. Substituting
\eqref{eq:shared_response_constraint} into
\eqref{eq:MB_estimator} gives
\begin{equation}
    \hat\xvec_{\rm SR}(\yvec)
    =
    \mubx+
    \sum\nolimits_{m=1}^{M}\omega_m
    \Vmat_m\diag(\hvec)\Vmat_m^T(\yvec-\muby),
    \label{eq:shared_estimator}
\end{equation}
where 
$\omega_m\geq0$, and $\sum_{m=1}^{M}\omega_m=1$. 
This restriction reduces the number of spectral-response parameters
from $MN$ to $N$ for fixed weights.
The weights are design parameters: $\omega_m=p_m$ is prior-weighted fusion, while $\omega_m\hspace{-0.1cm}=\hspace{-0.1cm}\delta[m\hspace{-0.1cm}-\hspace{-0.1cm}k]$ recovers a topology-blind single-basis GSP estimator of the candidate $k$.

\begin{proposition}[Optimal shared-response estimator]
\label{prop:shared_blind}
For fixed weights $\omegavec$, the minimum-norm \ac{mse}-optimal response
within the shared-response class \eqref{eq:shared_estimator} is 
\beqna
\hvec^\star
=
\Big(
\sum\nolimits_{m=1}^{M}\sum\nolimits_{\ell=1}^{M}
\omega_m\omega_\ell
(\Vmat_m^T\Cyy\Vmat_\ell)
\odot
(\Vmat_m^T\Vmat_\ell)
\Big)^\dagger
\nonumber\\
\quad\times
\sum\nolimits_{m=1}^{M}\omega_m
\diag(\Vmat_m^T\Cxy\Vmat_m).
\label{eq:shared_solution}
\eeqna
\end{proposition}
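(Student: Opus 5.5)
The plan is to reduce the shared-response problem to the multi-basis problem of Theorem~\ref{thm:MB} by a linear reparameterization, so that no new orthogonality computation is needed. Under the constraint \eqref{eq:shared_response_constraint}, the stacked coefficient vector is
\[
\etavec=\boldsymbol\Omega\hvec, \qquad \boldsymbol\Omega\define\omegavec\otimes\Imat_N=[\omega_1\Imat_N,\ldots,\omega_M\Imat_N]^T\in\mathbb R^{MN\times N}.
\]
The shared-response class \eqref{eq:shared_estimator} is therefore exactly the multi-basis class \eqref{eq:MB_estimator} restricted to $\etavec\in\range(\boldsymbol\Omega)$.

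First I will recall the quadratic form of the risk that underlies Theorem~\ref{thm:MB}. By \eqref{eq:lmmse_projection}, or by direct expansion, the MSE of \eqref{eq:MB_estimator} is
\[
\tr(\Cxx)-2\etavec^T\rvecD+\etavec^T\SmatD\etavec.
\]
The key identities are:
\begin{itemize}
\item $\tr\big(\Vmat_m\diag(\hvec_m)\Vmat_m^T\Cyy\Vmat_\ell\diag(\hvec_\ell)\Vmat_\ell^T\big)=\hvec_m^T\big[(\Vmat_m^T\Cyy\Vmat_\ell)\odot(\Vmat_m^T\Vmat_\ell)\big]\hvec_\ell$, which follows from $\tr(\diag(\mathbf a)\Bmat\diag(\mathbf b)\Bmat'^T)=\mathbf a^T(\Bmat\odot\Bmat')\mathbf b$;
\item $\tr\big(\Vmat_m\diag(\hvec_m)\Vmat_m^T\Cxy^T\big)=\hvec_m^T\diag(\Vmat_m^T\Cxy\Vmat_m)$.
\end{itemize}
These I take as established in the proof of Theorem~\ref{thm:MB} (Appendix~\ref{AppA}). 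I will re-derive them only briefly, if at all, since the paper allows earlier results to be assumed.

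Substituting $\etavec=\boldsymbol\Omega\hvec$ gives the objective
\[
J(\hvec)=\tr(\Cxx)-2\hvec^T\boldsymbol\Omega^T\rvecD+\hvec^T\boldsymbol\Omega^T\SmatD\boldsymbol\Omega\hvec.
\]
Blockwise, $\boldsymbol\Omega^T\SmatD\boldsymbol\Omega=\sum_{m,\ell}\omega_m\omega_\ell[\SmatD]_{m\ell}$, which is the matrix inside the pseudoinverse in \eqref{eq:shared_solution}. Likewise, $\boldsymbol\Omega^T\rvecD=\sum_m\omega_m\diag(\Vmat_m^T\Cxy\Vmat_m)$. The function $J$ is a convex quadratic, since $\SmatD\succeq\zerovec$ as the Gram matrix of the risk quadratic form, and so $\boldsymbol\Omega^T\SmatD\boldsymbol\Omega\succeq\zerovec$.

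The step needing care is the existence and minimum-norm claim. The normal equations $\boldsymbol\Omega^T\SmatD\boldsymbol\Omega\hvec=\boldsymbol\Omega^T\rvecD$ must be consistent, i.e., $\boldsymbol\Omega^T\rvecD\in\range(\boldsymbol\Omega^T\SmatD\boldsymbol\Omega)$. I will argue this as follows.
\begin{enumerate}
\item Write $\SmatD=\Gmat^T\Gmat$, where $\Gmat$ maps $\etavec$ to $\mathrm{vec}\big(\sum_m\Vmat_m\diag(\hvec_m)\Vmat_m^T\Cyy^{1/2}\big)$.
\item In the same representation, $\rvecD=\Gmat^T\mathrm{vec}(\Astar\Cyy^{1/2})$.
\item Hence $\boldsymbol\Omega^T\rvecD\in\range(\boldsymbol\Omega^T\Gmat^T)=\range(\boldsymbol\Omega^T\Gmat^T\Gmat\boldsymbol\Omega)$.
\end{enumerate}
With consistency in hand, standard least-squares theory shows that the set of minimizers is $\hvec^\star+\Ncal(\boldsymbol\Omega^T\SmatD\boldsymbol\Omega)$. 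Its minimum-norm element is $(\boldsymbol\Omega^T\SmatD\boldsymbol\Omega)^\dagger\boldsymbol\Omega^T\rvecD$, which yields \eqref{eq:shared_solution}.

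I expect this range-inclusion step to be the only nontrivial one. It is where the representation of $\rvecD$ through $\Astar$ and \eqref{eq:lmmse_projection} must be used.
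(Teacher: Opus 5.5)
Your proposal is correct and follows the paper's own route: substitute $\hvec_m=\omega_m\hvec$, i.e., $\etavec=(\omegavec\otimes\Imat_N)\hvec$, into the quadratic risk \eqref{eq:MB_risk}, and take the Moore--Penrose solution of the resulting normal equations. Your Gram-factorization argument for the consistency of those normal equations makes explicit a step the paper leaves implicit; in Appendix~\ref{AppA} it follows instead from the risk being nonnegative, hence bounded below.
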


\begin{proof}
Substituting $\hvec_m=\omega_m\hvec$ into the \ac{mse} criterion \eqref{eq:bayes_risk} (or equivalently, into \eqref{eq:MB_risk} from Appendix~\ref{AppA}) yields a convex quadratic function of $\hvec$. Its normal equations give \eqref{eq:shared_solution}, whose Moore--Penrose solution is the minimum-norm minimizer.
\end{proof}

For fixed
$\omegavec$, the shared-response operator space is a linear subspace
of the multi-basis space. Hence, by the Pythagorean identity for orthogonal projections \cite{luenberger1969optimization} and Theorem~\ref{thm:projection},
the \ac{mse} of the shared-response restriction is
\beqna
\Rcal(\hat\xvec_{\rm SR}^\star)
-\Rcal(\hat\xvec_{\rm MB}^\star)\hspace{4.75cm}
\nonumber\\
=
\Big\|
\sum\nolimits_{m=1}^{M}
\Vmat_m
\diag(\omega_m\hvec^\star-\hvec_m^\star)
\Vmat_m^T
\Big\|_{\Cyy}^{2}.
\label{eq:shared_response_gap}
\eeqna
where $\hvec_m^\star$ and $\hvec^\star$ are given by
\eqref{eq:eta_solution} and \eqref{eq:shared_solution}, respectively.

\begin{remark}[Low-rank extension]
\label{rem:lowrank}
The shared-response model can be generalized by allowing a small number of spectral responses across the candidate graphs. Specifically, consider multi-basis coefficients admitting
\begin{equation}
    \Hmat=[\hvec_1,\ldots,\hvec_M]
    =\Umat\Bmat^T,
    \qquad \rank(\Hmat)\leq r,
    \label{eq:lowrank_factor}
\end{equation}
where the columns of $\Umat$ define $r$ shared spectral templates and the rows of $\Bmat$ contain their graph-dependent weights. This factorization uses $r(N+M-r)$ nominal degrees of freedom, compared with $MN$ for the unrestricted multi-basis model.

The resulting rank-$r$ optimization (minimum MSE for $\Hmat$ as in \eqref{eq:lowrank_factor}) is nonconvex
but can be addressed by alternating least squares.
For $r=1$, it reduces to the shared-response model
in Proposition~\ref{prop:shared_blind} when the
graph-dependent weights are fixed.

\end{remark}

\vspace{-0.25cm}
\subsection{Topology-Adaptive GSP Estimation}
\label{subsec:topology_selection}
The estimators in Subsection~\ref{subsec:shared_blind} fuse information across all candidate graphs. In contrast, the following approach first infers the topology or weights the candidates based on the observation and then estimates the graph signal.

Specifically, for each candidate topology \(m\), we consider the GSP affine estimator
\begin{equation}
    \hat\xvec_{\rm GSP}^{(m)}(\yvec)
    =
    \muxm{m}
    +
    \Vmat_m
    \diag(\hvec_{{\rm GSP},m}^{\star})
    \Vmat_m^T
    (\yvec-\muym{m}),
    \label{eq:topology_specific_gsp}
\end{equation}
where
\begin{equation}
    \hvec_{{\rm GSP},m}^{\star}
    =
    \dxym{m}\oslash\dyym{m},
    \label{eq:topology_specific_response}
\end{equation}
in which
\begin{subequations}
\label{eq:dxy_dyy}
\begin{align}
\dxym{m}
&\define
\diag\!\left(\Vmat_m^T\Cxym{m}\Vmat_m\right),
\label{eq:dxy_m}\\
\dyym{m}
&\define
\diag\!\left(\Vmat_m^T\Cyym{m}\Vmat_m\right),
\label{eq:dyy_m}
\end{align}
\end{subequations}
Since $\Cyym{m}\succ\zerovec$, all entries of $\dyym{m}$ are
strictly positive.
For a known topology $m$, the estimator  \eqref{eq:topology_specific_gsp}-\eqref{eq:dxy_dyy} is the minimum-\ac{mse}
estimator within the class of affine graph filters diagonalized by \(\Vmat_m\), which is the
GSP-LMMSE estimator of \cite{kroizer2021bayesian}.

Alternatively, one may use a common spectral response shared across all
candidate topologies,
\begin{equation}
    \hat\xvec_{\rm SR}^{(m)}(\yvec)
    =
    \muxm{m}
    +
    \Vmat_m\diag(\hvec_{\rm SR}^\star)\Vmat_m^T
    (\yvec-\muym{m}),
    \label{eq:topology_aware_estimator}
\end{equation}
with
\begin{equation}
    \hvec_{\rm SR}^\star
    =
    \left(\sum\nolimits_{m=1}^{M}p_m\dxym{m}\right)
    \oslash
    \left(\sum\nolimits_{m=1}^{M}p_m\dyym{m}\right).
    \label{eq:topology_aware_shared}
\end{equation}
Similar to the derivation of the shared-response estimator in
Section~\ref{subsec:shared_blind}, \(\hvec_{\rm SR}^\star\) is the minimum-\ac{mse} solution
within the class of graph filters constrained to share a common spectral
response. Thus,
\(\hvec_{{\rm GSP},m}^{\star}\) adapts the spectral response to each topology,
whereas \(\hvec_{\rm SR}^\star\) couples all candidate graphs through a
single response. As in \eqref{eq:shared_estimator}, topology-dependent
weights \(\{\omega_m\}\) can also be incorporated into the shared-response
design.


Let \(\hat m(\yvec)\in\{1,\ldots,M\}\) denote a topology-selection rule. The topology-selective GSP-LMMSE and SR-GSP-LMMSE estimators are obtained by replacing 
\(m\) in \eqref{eq:topology_specific_gsp} and
\eqref{eq:topology_aware_estimator} with the selected topology
\(\hat m(\yvec)\).
When the topology-conditioned likelihood is available, the topology
can be estimated by the \ac{map} rule
\begin{equation}
    \hat m_{\rm MAP}(\yvec)
    =
    \arg\max_{m=1,\ldots,M}
    \left\{
    p_m\,p(\yvec\mid m^\star=m)
    \right\}.
    \label{eq:map_topology_selector}
\end{equation}
This corresponds to the hard-decision counterpart of the posterior weighting in the topology-blind MMSE estimator~\eqref{eq:mmse_unknown}.
When the likelihood in \eqref{eq:map_topology_selector} is difficult to
evaluate, the forward model in \eqref{eq:nonlinear_model} suggests the
residual-based rule
\begin{equation}
    \hat m_{\rm res}(\yvec)
    =
    \arg \min_{m=1,\ldots,M}
    \left\|
    \yvec-
    \gvec(
    \Lmat_m,
    \hat\xvec_{\rm GSP}^{(m)}(\yvec)
    )
    \right\|^2.
    \label{eq:residual_topology_selector}
\end{equation}
Unlike \eqref{eq:map_topology_selector}, this model-consistency criterion requires no
explicit probabilistic model for the posterior.
Replacing the hard decision by the posterior weights
$\pi_m(\mathbf{y})\hspace{-0.1cm}\propto \hspace{-0.1cm}p_m\,p(\mathbf{y}|m^\star=m)$ gives the
multiple-model GSP estimator
\begin{equation}
\hat{\mathbf{x}}_{\rm MM}(\mathbf{y})=\sum_{m=1}^{M}\pi_m(\mathbf{y})\,
\hat{\mathbf{x}}^{(m)}_{\rm GSP}(\mathbf{y}),
\label{eq:40}
\end{equation}
which approximates the MMSE estimator~\eqref{eq:mmse_unknown} by replacing each
conditional mean  with the
affine graph-filter branch~\eqref{eq:topology_specific_gsp}. 

Although \eqref{eq:topology_specific_gsp} and
\eqref{eq:topology_aware_estimator} are affine, the topology-selective estimators are generally nonlinear in $\yvec$ because $\hat m(\yvec)$ is data dependent. Their MSE is
\beqna
    \Rcal(\hat\xvec_{\rm TS})
    =\hspace{6.5cm}\nonumber\\
    \sum_{m=1}^{M}p_m
    \sum_{\ell=1}^{M}
    \mathbb E\!\left[
    \|\hat\xvec_{\rm GSP}^{(\ell)}(\yvec)-\xvec\|^2
    \delta[\hat m(\yvec)-\ell]
   | m^\star=m
    \right],
    \label{eq:topology_selection_risk}
\eeqna
which depends on both the topology-specific estimators and
the induced selection regions. For
$\hat m(\yvec)=m^\star$ a.s., 
the
topology-specific GSP-LMMSE estimator \cite{kroizer2021bayesian} is obtained by setting $m=m^\star$
in \eqref{eq:topology_specific_gsp}.
However, in general, there is no universal \ac{mse} ordering relative to the multi-basis estimator.

\begin{remark}[Relation to shared-response GSP estimation]
\label{rem:shared_response}
Our preliminary work \cite{Zeltzer_Routtenberg_CISS} considered a
complementary setting in which the active topology \(m^\star\) is known at
inference. The corresponding Shared-Response GSP estimator is
obtained from \eqref{eq:topology_aware_estimator} by setting \(m=m^\star\).
Both the SR-GSP and the shared-response estimators employ a common spectral response across topologies. However, the SR-GSP estimator uses the revealed topology, while the shared-response
estimator in \eqref{eq:shared_estimator} is topology-blind and fuses the candidate graph bases.
\end{remark}

\setlength{\dbltextfloatsep}{6pt}   
\begin{table*}[t]
\centering
\caption{Summary of the estimators considered in Section~\ref{sec:multibasis}.}
\label{tab:estimators_summary}
\vspace{-0.15cm}
\footnotesize
\setlength{\tabcolsep}{3pt}
\begin{tabular}{|l|c|l|l|l|l|}
\hline
Estimator & Eq. & Role & Topology info. & Structure & Design criterion \\
\hline
Topology-blind LMMSE
& \eqref{eq:global_lmmse}
& Benchmark
& None
& Affine
& Global MSE \\

Multi-basis GSP-LMMSE
& \eqref{eq:opt_est}
& Proposed
& Candidate set
& Multi-basis
& Global MSE \\

Shared-response GSP-LMMSE
& \eqref{eq:shared_estimator}
& Proposed
& Candidate set
& Shared response
& Global MSE \\

Topology-selective GSP-LMMSE
& \eqref{eq:topology_specific_gsp}, $m=\hat m$
& Proposed
& Estimated
& Topology-specific
& Conditional MSE \\

Multiple-model GSP-LMMSE
& \eqref{eq:40}
& Proposed
& Estimated (posterior)
& Topology-specific, weighted
& Conditional MSE \\

Topology-selective SR-GSP-LMMSE
& \eqref{eq:topology_aware_estimator}, $m=\hat m$
& Proposed
& Estimated
& Shared response
& Topology-averaged MSE \\

Oracle GSP-LMMSE
& \eqref{eq:topology_specific_gsp}, $m=m^\star$
& Oracle
& True topology
& Topology-specific
& Conditional MSE \\

Oracle SR-GSP-LMMSE
& \eqref{eq:topology_aware_estimator}, $m=m^\star$
& Prior work
& True topology
& Shared response
& Topology-averaged MSE \\
\hline
\end{tabular}
\end{table*}

\begin{remark}[Connection to graphical Wiener filtering]
\label{rem:wiener}
Consider the linear special case of \eqref{eq:nonlinear_model},
conditioned on topology $m$,
\begin{equation}
    \yvec
    =
    \Vmat_m\diag(\boldsymbol\phi_m)\Vmat_m^T\xvec+\wvec,
    \label{eq:wiener_model}
\end{equation}
where $\xvec$ and $\wvec$ are uncorrelated and
graph wide-sense stationary \ac{wrt} $\Lmat_m$, with
graph power spectra $\svec_{x,m}$ and $\svec_{w,m}$.
Then, \eqref{eq:dxy_dyy} yields
$\dxym{m}=\svec_{x,m}\odot\boldsymbol\phi_m$ and
$\dyym{m}=\svec_{x,m}\odot\boldsymbol\phi_m^{\odot 2}
+\svec_{w,m}$.
Hence, the optimal shared response in
\eqref{eq:topology_aware_shared} reduces to
\begin{equation}
[\hvec_{\rm SR}^{\star}]_n
=
\frac{
\sum_{m=1}^{M}p_m
[\svec_{x,m}]_n[\boldsymbol\phi_m]_n
}{
\sum_{m=1}^{M}p_m
\left(
[\svec_{x,m}]_n[\boldsymbol\phi_m]_n^2
+[\svec_{w,m}]_n
\right)
},
\label{eq:wiener_shared_response}
\end{equation}
$n=1,\ldots,N$.
This is the shared-response counterpart of graphical
Wiener filtering~\cite{kroizer2021bayesian,7891646}.
\end{remark}


Table~\ref{tab:estimators_summary} summarizes the estimators
developed in this section, distinguishing the topology-blind
designs from topology-aware benchmarks and topology-selection
alternatives.

\section{Transferable Graph-Signal Estimation Across Network Sizes}
\label{sec:crosssize}
\vspace{-0.05cm}
The shared-response estimator in Section~\ref{sec:multibasis} requires candidate
graphs of the same dimension. In this section, we instead parameterize the response as a common function of graph frequency, enabling joint estimation across networks of different sizes.
Subsection \ref{subsec:crosssize} introduces the shared-function representation and its cross-network design. Then, Subsection \ref{subsec:prop} provides its interpretation
and implementation.

\vspace{-0.25cm}
\subsection{Shared Spectral Function and Cross-Network Design}
\label{subsec:crosssize}

Consider \(J\) network groups, where group \(j\) contains candidate graphs
\(\{\Lmat_{j,k}\}_{k\in\Ccal_j}\) with \(N_j\) nodes. The group index is
known at inference and determines the relevant candidate family. Let
\(\muxj{j}\), \(\muyj{j}\), \(\Cxyj{j}\), and
\(\Cyyj{j}\succ\zerovec\) denote the corresponding mixture moments, defined
as in \eqref{eq:conditional_moments}--\eqref{eq:mixture_moments}.

To enable transfer across graph sizes, we parameterize the spectral response
using a fixed dictionary. For topology weights \(\omega_{j,k}\geq0\), with
\(\sum_{k\in\Ccal_j}\omega_{j,k}=1\), consider
\begin{equation}
    \hat\xvec_j(\yvec;\alphavec)
    =
    \muxj{j}
    +
    \sum_{k\in\Ccal_j}
    \omega_{j,k}
    h(\Lmat_{j,k};\alphavec)
    (\yvec-\muyj{j}),
    \label{eq:crosssize_estimator}
\end{equation}
where
\begin{equation}
    h(\Lmat;\alphavec)
    =
    \Vmat
    \left(
        \sum_{q=0}^{Q-1}\alpha_q\phi_q(\Lambdamat)
    \right)
    \Vmat^T,
    \label{eq:dictionary_graph_filter}
\end{equation}
and \(\{\phi_q\}_{q=0}^{Q-1}\) is a predefined spectral dictionary, with
\(\phi_q(\Lambdamat)\) obtained by applying \(\phi_q\) elementwise to the
eigenvalues. Since \(\alphavec\in\mathbb R^Q\) is independent of the graph
dimension, the same spectral response can be applied across network sizes.
For models with an additive-reference ambiguity \cite{routtenberg2020}, we
assume that a reference is fixed prior to estimation.

We design the common coefficients by minimizing the average per-node MSE
\begin{equation}
    \Rcal_{\rm XS}(\alphavec)
    \define
    \frac{1}{J}\sum_{j=1}^{J}\frac{1}{N_j}
    \mathbb E_j\!\left[
        \|\hat\xvec_j(\yvec;\alphavec)-\xvec\|^2
    \right].
    \label{eq:crosssize_risk}
\end{equation}
For $J=1$ (and hence $N_1=N$), \eqref{eq:crosssize_risk} reduces,
up to the normalization factor $1/N$, to the MSE in
\eqref{eq:bayes_risk}. The normalization by $N_j$ enables comparison
across networks of different sizes.

For the following theorem, we define the group-wise dictionary operators for $q=0,\ldots,Q-1$, as 
\begin{equation}
    \Hmat_{j,q}
    \define
    \sum_{k\in\Ccal_j}
    \omega_{j,k}
    \Vmat_{j,k}
    \phi_q(\Lambdamat_{j,k})
    \Vmat_{j,k}^T,
    \label{eq:Hjq}
\end{equation}
Further, define $\mathbf S_{\rm XS}\in\mathbb R^{Q\times Q}$ and
$\mathbf r_{\rm XS}\in\mathbb R^Q$ by
\begin{subequations}
\label{eq:XS_matrices}
\begin{align}
[\mathbf S_{\rm XS}]_{q+1,r+1}
&\define
\sum\nolimits_{j=1}^{J}
\frac{
   1
}{N_j}  \tr(\Hmat_{j,q}\Cyyj{j}\Hmat_{j,r}^T),
\label{eq:SXS}\\
[\mathbf r_{\rm XS}]_{q+1}
&\define
\sum\nolimits_{j=1}^{J}
\frac{
    1
}{N_j}\tr(\Hmat_{j,q}\Cxyj{j}^{T}),
\label{eq:rXS}
\end{align}
\end{subequations}
for $q,r=0,\ldots,Q-1$.

\begin{theorem}[Cross-network GSP-\ac{lmmse} estimator]
\label{thm:crosssize}
For any fixed dictionary $\{\phi_q\}_{q=0}^{Q-1}$, the minimum-norm
coefficient vector minimizing \eqref{eq:crosssize_risk} is
\begin{equation}
    \alphavec_{\rm XS}^\star
    =
    \mathbf S_{\rm XS}^{\dagger}\mathbf r_{\rm XS}.
    \label{eq:crosssize_alpha}
\end{equation}
The corresponding minimum average per-node MSE is
\begin{equation}
    \Rcal_{\rm XS}(\alphavec_{\rm XS}^\star)
    =
    \frac{1}{J}\sum_{j=1}^{J}
    \frac{\tr(\Cxxj{j})}{N_j}
    -
    \frac{1}{J}
    \mathbf r_{\rm XS}^{T}
    \mathbf S_{\rm XS}^{\dagger}
    \mathbf r_{\rm XS}.
    \label{eq:MB_minrisk_cross}
\end{equation}
\end{theorem}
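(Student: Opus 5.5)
The plan is to reduce \eqref{eq:crosssize_risk} to a convex quadratic in $\alphavec$ and read off the normal equations, exactly as in the proof of Theorem~\ref{thm:MB}. First, by linearity of \eqref{eq:dictionary_graph_filter} in $\alphavec$ and the definition \eqref{eq:Hjq}, the operator of group $j$ is
\begin{equation*}
\sum_{k\in\Ccal_j}\omega_{j,k}h(\Lmat_{j,k};\alphavec)=\sum_{q=0}^{Q-1}\alpha_q\Hmat_{j,q}\define\Amat_j(\alphavec).
\end{equation*}
So the estimator in \eqref{eq:crosssize_estimator} is affine with gain $\Amat_j(\alphavec)$ and centered on the group mixture means.

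Next, for each group, expand the per-group MSE directly, as in the derivation behind \eqref{eq:lmmse_projection}. The centering by $\muxj{j}$ and $\muyj{j}$ cancels the mean terms, giving
\begin{equation*}
\mathbb E_j\|\hat\xvec_j-\xvec\|^2=\tr(\Cxxj{j})-2\tr(\Amat_j\Cxyj{j}^T)+\tr(\Amat_j\Cyyj{j}\Amat_j^T).
\end{equation*}
Here we use that $\Cxyj{j}$ and $\Cyyj{j}$ are the mixture (unconditional) moments \eqref{eq:mixture_moments} within group $j$, which is what makes these cross terms appear.

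Substitute $\Amat_j=\sum_q\alpha_q\Hmat_{j,q}$ and use bilinearity of the trace. Dividing by $N_j$ and averaging over $j$ gives
\begin{equation*}
\Rcal_{\rm XS}(\alphavec)=\frac1J\Big(\sum_j\frac{\tr(\Cxxj{j})}{N_j}-2\mathbf r_{\rm XS}^T\alphavec+\alphavec^T\mathbf S_{\rm XS}\alphavec\Big),
\end{equation*}
with $\mathbf S_{\rm XS}$ and $\mathbf r_{\rm XS}$ exactly as in \eqref{eq:XS_matrices}. The matrix $\mathbf S_{\rm XS}$ is symmetric PSD, since it is the Gram matrix of the operators $\{\Hmat_{j,q}\}_q$ under the inner product $\sum_j N_j^{-1}\tr(\Amat\Cyyj{j}\Bmat^T)$ with $\Cyyj{j}\succ\zerovec$. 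Hence the objective is convex and its minimizers are exactly the solutions of the normal equations $\mathbf S_{\rm XS}\alphavec=\mathbf r_{\rm XS}$.

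The main point to verify is consistency of these normal equations when $\mathbf S_{\rm XS}$ is singular, for instance through dictionary redundancy. We need $\mathbf r_{\rm XS}\in\range(\mathbf S_{\rm XS})$, which I would show as follows. If $\mathbf S_{\rm XS}\mathbf v=\zerovec$, then $\mathbf v^T\mathbf S_{\rm XS}\mathbf v=\sum_j N_j^{-1}\|\sum_q v_q\Hmat_{j,q}\|_{\Cyyj{j}}^2=0$. Positive definiteness of each $\Cyyj{j}$ then forces $\sum_q v_q\Hmat_{j,q}=\zerovec$ for all $j$, and therefore $\mathbf r_{\rm XS}^T\mathbf v=0$. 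So $\mathbf r_{\rm XS}\perp\Ncal(\mathbf S_{\rm XS})=\range(\mathbf S_{\rm XS})^\perp$. Consequently $\mathbf S_{\rm XS}^\dagger\mathbf r_{\rm XS}$ solves the normal equations, and by the standard Moore--Penrose property it is the unique minimizer lying in $\range(\mathbf S_{\rm XS})$, i.e., the minimum-norm one, which is \eqref{eq:crosssize_alpha}.

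Finally, substituting \eqref{eq:crosssize_alpha} and using $\mathbf S_{\rm XS}^\dagger\mathbf S_{\rm XS}\mathbf S_{\rm XS}^\dagger=\mathbf S_{\rm XS}^\dagger$ collapses the last two terms to $-\mathbf r_{\rm XS}^T\mathbf S_{\rm XS}^\dagger\mathbf r_{\rm XS}$. Multiplying by $1/J$ gives \eqref{eq:MB_minrisk_cross}.
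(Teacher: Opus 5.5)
Your proposal is correct and follows the same route as the paper. You write the group-$j$ operator as $\sum_{q}\alpha_q\Hmat_{j,q}$, expand $\Rcal_{\rm XS}(\alphavec)$ into the quadratic $\frac{1}{J}\big(\sum_j \tr(\Cxxj{j})/N_j-2\mathbf r_{\rm XS}^T\alphavec+\alphavec^T\mathbf S_{\rm XS}\alphavec\big)$, and take the pseudoinverse solution of the normal equations $\mathbf S_{\rm XS}\alphavec=\mathbf r_{\rm XS}$. Your explicit check that $\mathbf r_{\rm XS}\perp\Ncal(\mathbf S_{\rm XS})$, using $\Cyyj{j}\succ\zerovec$, supplies a consistency step that the paper's proof of this theorem leaves implicit.
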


\begin{proof}
Using \eqref{eq:Hjq}, the linear operator in
\eqref{eq:crosssize_estimator} is
$\sum_{q=0}^{Q-1}\alpha_q\Hmat_{j,q}$. Hence,
\eqref{eq:crosssize_risk} can be written as
\beqna
    \Rcal_{\rm XS}(\alphavec) \hspace{-0.05cm}=\hspace{-0.05cm}
    \frac{1}{J}\sum_{j=1}^{J}
    \frac{\tr(\Cxxj{j})}{N_j}
    \hspace{-0.05cm}+\hspace{-0.05cm}
    \frac{1}{J}\hspace{-0.05cm}
    \left(
        \alphavec^T\mathbf S_{\rm XS}\alphavec
        -
        2\alphavec^T\mathbf r_{\rm XS}
    \right)\hspace{-0.05cm}.
    \label{eq:crosssize_quadratic}
\eeqna
The normal equations are thus
$\mathbf S_{\rm XS}\alphavec=\mathbf r_{\rm XS}$, whose
minimum-norm is \eqref{eq:crosssize_alpha}.
Substituting \eqref{eq:crosssize_alpha} into \eqref{eq:crosssize_quadratic}
gives \eqref{eq:MB_minrisk_cross}.
\end{proof}

Theorem~\ref{thm:crosssize} can be viewed as the cross-network counterpart of Theorem~\ref{thm:MB}: the fixed-size multi-basis coefficients are replaced by a common spectral function that is jointly optimized across networks of different dimensions.
We next characterize the performance loss induced by the shared cross-network spectral function. Applying the \ac{lmmse} orthogonality principle group-wise, similarly to
Theorem~\ref{thm:projection}, gives
\beqna
    \Rcal_{\rm XS}(\alphavec)
    =
    \Rcal_{{\rm L},{\rm XS}}^\star\hspace{5cm}
    \nonumber\\
    +
    \frac{1}{J}\sum\nolimits_{j=1}^{J}\frac{1}{N_j}
    \left\|
        \sum\nolimits_{q=0}^{Q-1}\alpha_q\Hmat_{j,q}
        -
        \Cxyj{j}\Cyyj{j}^{-1}
    \right\|_{\Cyyj{j}}^2,
    \label{eq:crosssize_lmmse_decomp}
\eeqna
where the average per-node MSE of the group-wise topology-blind
\ac{lmmse} estimators is
\begin{equation}
   \hspace{-0.15cm} \Rcal_{{\rm L},{\rm XS}}^\star
    \define
    \frac{1}{J}\sum_{j=1}^{J}\frac{1}{N_j}
    \left(
        \tr(\Cxxj{j})
        -
        \tr(
            \Cxyj{j}\Cyyj{j}^{-1}\Cxyj{j}^T
        )
    \right)
    \label{eq:crosssize_lmmse_risk}
\end{equation}
 Thus, the excess MSE in
\eqref{eq:crosssize_lmmse_decomp} is the average weighted
distance between the operators induced by the common spectral function
and the group-wise topology-blind \ac{lmmse} operators.

Furthermore, let
\(\alphavec_j^\star=[\alpha_{j,0}^\star,\ldots,
\alpha_{j,Q-1}^\star]^T\) denote the optimal coefficients when group \(j\)
is designed separately. The corresponding operator
\(\sum_{q=0}^{Q-1}\alpha_{j,q}^\star\Hmat_{j,q}\) is the
\(\Cyyj{j}\)-weighted projection of
\(\Cxyj{j}\Cyyj{j}^{-1}\) onto the group-\(j\) graph-filter space.
Therefore, 
the Pythagorean identity gives
\beqna
\Rcal_{\rm XS}(\alphavec_{\rm XS}^\star)
-\Rcal_{{\rm L},{\rm XS}}^\star
\hspace{5.25cm}
\nonumber\\
=
\frac{1}{J}\sum\nolimits_{j=1}^{J}\frac{1}{N_j}
\Big\|
    \sum\nolimits_{q=0}^{Q-1}\alpha_{j,q}^\star\Hmat_{j,q}
    -
    \Cxyj{j}\Cyyj{j}^{-1}
\Big\|_{\Cyyj{j}}^2
\nonumber\\
+
\frac{1}{J}\sum_{j=1}^{J}\frac{1}{N_j}
\Big\|
    \sum\nolimits_{q=0}^{Q-1}
    (\alpha_{{\rm XS},q}^\star-\alpha_{j,q}^\star)
    \Hmat_{j,q}
\Big\|_{\Cyyj{j}}^2.\hspace{0.75cm}
\label{eq:XS_decomposition}
\eeqna
The first term in \eqref{eq:XS_decomposition} is the within-group
graph-filter approximation error, whereas the second is the additional
cost imposed by sharing a single spectral function across network groups.

\vspace{-0.25cm}
\subsection{Interpretation and Implementation}
\label{subsec:prop}

\begin{remark}[Spectral interpretation]
\label{rem:spectral_interpretation}
When each group contains a single known graph, the cross-network design
admits a simple spectral interpretation. Each graph has its own optimal
GSP-\ac{lmmse} response, sampled at its graph frequencies. The proposed
design fits a single function $h(\lambda;\alphavec)$ to these responses
jointly across all graphs, thereby allowing graphs of different sizes to
share the same spectral mapping.

In particular, define the graph-specific GSP-\ac{lmmse} response
\begin{equation}
    [\fstar{j}]_n
    \define
    \frac{
    [\diag(\Vmat_j^T\Cxyj{j}\Vmat_j)]_n}
    {[\diag(\Vmat_j^T\Cyyj{j}\Vmat_j)]_n},
    \qquad n=1,\ldots,N_j .
    \label{eq:crosssize_per_graph_response}
\end{equation}
Hence, Theorem~\ref{thm:crosssize} fits a common spectral function
$h(\lambda;\alphavec)$ to the graph-specific GSP-\ac{lmmse} responses
across all networks.
Then, minimizing the cross-network MSE in
\eqref{eq:crosssize_risk} is equivalent to the weighted curve-fitting problem
\beqna
\alphavec_{\rm XS}^\star
=
\argmin_{\alphavecsmall \in\mathbb R^Q }
\sum_{j=1}^{J}\frac{1}{N_j}
\sum_{n=1}^{N_j}
[\diag(\Vmat_j^T\Cyyj{j}\Vmat_j)]_n
\nonumber\\
\times
\left(
h(\lambda_{j,n};\alphavec)-[\fstar{j}]_n
\right)^2 .
\label{eq:curvefit_risk}
\eeqna
Thus, \eqref{eq:curvefit_risk} is a weighted curve-fitting problem:
the samples $(\lambda_{j,n},[\fstar{j}]_n)$ from all graphs are fitted
by the common spectral function $h(\lambda;\alphavec)$, with each
frequency weighted by its measurement power. 
If $\mathbf{1}^T\mathbf{x}=0$, as in the \ac{psse}
simulations, the DC mode is excluded from the
estimate, and the sum in \eqref{eq:curvefit_risk}
starts at $n=2$.
\end{remark}


Beyond accommodating different graph sizes, the shared spectral-function
parameterization offers two additional advantages.
First, the estimation approach can be performed without explicit \ac{evd}:
Chebyshev graph filters can be implemented without 
\ac{evd} using sparse Laplacian--vector products
~\cite{Chebyshev_2018},
while rational filters can be implemented using sparse linear-system
solvers or \ac{arma} graph filters~\cite{Isufi_Leus2017}.
Second, the learned spectral function can be applied directly to a new, 
unseen graph $\Lmat_0$ whose topology is
known at inference. The corresponding estimator is
\begin{equation}
\hat\xvec_0(\yvec_0)
=
\muxj{0}
+
h(\Lmat_0;\alphavec_{\rm XS}^\star)
(\yvec_0-\muyj{0}).
\label{eq:unseen_estimator}
\end{equation}
Thus, no retraining of the spectral coefficients is required,
provided the target graph has compatible spectral scaling and
the target means are available.

\vspace{-0.25cm}

\section{Sample-Based Estimator Design}
\label{sec:samples}
\vspace{-0.05cm}
The estimators in Sections~\ref{sec:multibasis} and \ref{sec:crosssize} minimize population MSEs, and
depend on generally unknown statistical moments.
Given labeled training data, 
obtained from historical measurements or generated using the model in \eqref{eq:nonlinear_model}, they can instead be designed by
minimizing the corresponding empirical MSEs.
 This model-based discriminative approach \cite{dabush2023generative,shlezinger2022discriminative}
preserves the graph-filter structure while optimizing
its coefficients directly from estimation
errors.

\vspace{-0.25cm}
\subsection{Fixed-Dimension Estimators}
\label{subsec:sample_mb}
Suppose that for each
topology $m$, with known $\Lmat_m$, a training set
$\mathcal D_m=
\{(\xvec_m^{(t)},\yvec_m^{(t)})\}_{t=1}^{T_m}$
of \ac{iid} realizations is available.
The number of samples $T_m$ may vary across topologies.
 
We define the
prior-weighted sample means
\begin{equation}
\widehat{\mubx}
=
\sum_{m=1}^{M}\frac{p_m}{T_m}
\sum_{t=1}^{T_m}\xvec_m^{(t)},
~~~
\widehat{\muby}
=
\sum_{m=1}^{M}\frac{p_m}{T_m}
\sum_{t=1}^{T_m}\yvec_m^{(t)}
\label{eq:sample_means}
\end{equation}
and the corresponding empirical mixture covariances 
\beqna
\widehat{\Cxx}
\define
\sum_{m=1}^{M}\frac{p_m}{T_m}
\sum_{t=1}^{T_m}
(\xvec_m^{(t)}-\widehat{\mubx})
(\xvec_m^{(t)}-\widehat{\mubx})^T,
\label{eq:59}
\\
\widehat{\Cxy}
\define
\sum_{m=1}^{M}\frac{p_m}{T_m}
\sum_{t=1}^{T_m}
(\xvec_m^{(t)}-\widehat{\mubx})
(\yvec_m^{(t)}-\widehat{\muby})^T,
\label{eq:60}
\\
\widehat{\Cyy}
\define
\sum_{m=1}^{M}\frac{p_m}{T_m}
\sum_{t=1}^{T_m}
(\yvec_m^{(t)}-\widehat{\muby})
(\yvec_m^{(t)}-\widehat{\muby})^T.
\label{eq:sample_mixture_covariances}
\eeqna

For the multi-basis estimator from Subsection \ref{subsec:dictionary}, consider the weighted empirical MSE
\begin{equation}
\widehat{\Rcal}_{\rm MB}(\etavec)
\define
\sum_{m=1}^{M}\frac{p_m}{T_m}
\sum_{t=1}^{T_m}
\left\|
\widehat\xvec_{\rm MB}
(\yvec_m^{(t)};\etavec)-\xvec_m^{(t)}
\right\|^2,
\label{eq:empirical_MB_risk}
\end{equation}
where $\widehat\xvec_{\rm MB}(\cdot;\etavec)$ has the form
of \eqref{eq:MB_estimator}, with the population
means replaced by \eqref{eq:sample_means}. It can be shown in a similar manner to \cite{kroizer2021bayesian} that
for any fixed linear operator, these empirical means yield
the MSE-minimizing affine offset.
Expanding \eqref{eq:empirical_MB_risk} gives
\begin{equation}
\widehat{\Rcal}_{\rm MB}(\etavec)
=
\tr(\widehat\Cxx)
-2\etavec^T\widehat\rvecD
+\etavec^T\widehat\SmatD\etavec,
\label{eq:sample_MB_quadratic}
\end{equation}
where $\widehat\Cxx$ is defined in
\eqref{eq:59}, and
$\widehat\SmatD$ and $\widehat\rvecD$ are obtained
from \eqref{eq:S_blocks}--\eqref{eq:r_blocks} by
replacing $\Cyy$ and $\Cxy$ with their empirical
counterparts in \eqref{eq:60}--\eqref{eq:sample_mixture_covariances}.
Consequently, the minimum-norm empirical MSE minimizer is
\begin{equation}
\widehat\etavec
=
\widehat\SmatD^\dagger\widehat\rvecD.
\label{eq:sample_eta}
\end{equation}
Thus, the sample-based multi-basis estimator coincides with the
empirical MSE minimizer within the prescribed estimator class,
rather than merely a heuristic plug-in approximation.
The same principle applies to the shared-response estimator
in \eqref{eq:shared_estimator}. For fixed topology weights,
minimizing its weighted empirical MSE yields
\eqref{eq:shared_solution}, with $\Cxy$ and $\Cyy$
replaced by their empirical counterparts.

For the topology-specific estimators in
Subsection~\ref{subsec:topology_selection}, let
$\widehat{\muxm{m}}$, $\widehat{\muym{m}}$,
$\widehat{\Cxym{m}}$, and $\widehat{\Cyym{m}}$
denote the sample means and covariances conditioned on
topology $m$. Define
\begin{equation}
\widehat{\dxym{m}}
\define
\diag(\Vmat_m^T\widehat{\Cxym{m}}\Vmat_m),
~
\widehat{\dyym{m}}
\define
\diag(\Vmat_m^T\widehat{\Cyym{m}}\Vmat_m).
\label{eq:sample_dxy_dyy}
\end{equation}
Minimizing the corresponding conditional empirical MSEs
gives the topology-specific responses
\begin{equation}
\widehat\hvec_{{\rm GSP},m}
=
\widehat{\dxym{m}}
\oslash
\widehat{\dyym{m}},
\label{eq:sample_topology_specific_response}
\end{equation}
whereas minimizing their prior-weighted average under
a common-response constraint gives
\begin{equation}
\widehat\hvec_{\rm SR}
=
\left(
\sum\nolimits_{m=1}^{M}p_m\widehat{\dxym{m}}
\right)
\oslash
\left(
\sum\nolimits_{m=1}^{M}p_m\widehat{\dyym{m}}
\right).
\label{eq:sample_topology_aware_response}
\end{equation}
These responses are substituted into
\eqref{eq:topology_specific_gsp} and
\eqref{eq:topology_aware_estimator}.
For topology-selective estimation, the learned
topology-specific filters are combined with the selection
rule of Subsection~\ref{subsec:topology_selection}.

\vspace{-0.25cm}
\subsection{Cross-Size Shared-Function Estimator}
\label{subsec:sample_xs}

For each group $j$, suppose that
$\{(\xvec_j^{(t)},\yvec_j^{(t)})\}_{t=1}^{T_j}$
are \ac{iid} samples drawn from its within-group mixture.
Let $\widehat{\muxj{j}}$, $\widehat{\muyj{j}}$,
$\widehat{\Cxyj{j}}$, and $\widehat{\Cyyj{j}}$
denote the corresponding empirical means and covariances.

For fixed topology weights and spectral dictionary,
the common coefficients are learned by minimizing the
average per-node empirical MSE
\be
\widehat{\Rcal}_{\rm XS}(\alphavec)
\define{}
\frac{1}{J}\sum_{j=1}^{J}
\frac{1}{N_jT_j}
\sum_{t=1}^{T_j}
\left\|
\widehat\xvec_j(\yvec_j^{(t)};\alphavec)
-\xvec_j^{(t)}
\right\|^2,
\label{eq:empirical_XS_risk}
\ee
where $\widehat\xvec_j(\cdot;\alphavec)$ has the form
of \eqref{eq:crosssize_estimator}, with the group means
replaced by their empirical counterparts.
As in Theorem~\ref{thm:crosssize}, the empirical objective
is quadratic in $\alphavec$. Its minimum-norm minimizer is
\begin{equation}
\widehat\alphavec_{\rm XS}
=
\widehat{\mathbf S}_{\rm XS}^{\dagger}
\widehat{\mathbf r}_{\rm XS},
\label{eq:sample_crosssize_alpha}
\end{equation}
where, for $q,r=0,\ldots,Q-1$,
\begin{align}
[\widehat{\mathbf S}_{\rm XS}]_{q+1,r+1}
&\define
\sum\nolimits_{j=1}^{J}
\frac{
\tr(\Hmat_{j,q}\widehat{\Cyyj{j}}\Hmat_{j,r}^T)
}{N_j},
\nonumber\\
[\widehat{\mathbf r}_{\rm XS}]_{q+1}
&\define
\sum\nolimits_{j=1}^{J}
\frac{
\tr(\Hmat_{j,q}\widehat{\Cxyj{j}}^T)
}{N_j}.
\label{eq:sample_XS_matrices}
\end{align}
Substituting \eqref{eq:sample_crosssize_alpha} and the
empirical group means into \eqref{eq:crosssize_estimator}
gives the sample-based cross-network estimator.

Hence, the learned spectral coefficients directly minimize
the empirical estimation error across all network groups,
while their number remains independent of the graph
dimensions. In particular, this discriminative design
does not require identifying the underlying nonlinear
measurement model.

\vspace{-0.25cm}
\subsection{Statistical Complexity}
\label{subsec:sample_complexity}

The proposed estimator classes differ
in the number of coefficients learned from the training data.
Table~\ref{tab:complexity} summarizes these dimensions and
identifies the corresponding estimator formulations.
\begin{table}[t]
\centering
\caption{Coefficient dimensions of the estimator classes.}
\label{tab:complexity}
\vspace{-0.15cm}
\scriptsize
\setlength{\tabcolsep}{2.5pt}
\begin{tabular}{@{}p{0.37\columnwidth}p{0.40\columnwidth}c@{}}
\hline
Estimator & Coefficient dimension & Cross-size\\
\hline
Topology-blind \ac{lmmse} \eqref{eq:global_lmmse}
& $N^2$
& No\\
Multi-basis \eqref{eq:MB_estimator}
& $MN$; effective $\leq 1+M(N-1)$
& No\\
Rank-$r$ multi-basis \eqref{eq:lowrank_factor}
& $r(N+M-r)$
& No\\
Shared response \eqref{eq:shared_estimator}
& $N$ (fixed $\omegavec$)
& No\\
Shared spectral function \eqref{eq:crosssize_estimator}
& $Q$
& Yes\\
\hline
\end{tabular}
\end{table}
The topology-blind \ac{lmmse} estimator in
\eqref{eq:global_lmmse} uses an unrestricted
$N\times N$ matrix, requiring $N^2$ coefficients.
The multi-basis estimator in \eqref{eq:MB_estimator}
assigns $N$ spectral coefficients to each of the
$M$ candidate graphs, yielding $MN$ coefficients.
However, since the identity matrix belongs to any
graph-filter basis, the representation is nonunique,
and the dimension of the induced operator space is
at most $1+M(N-1)$.
The rank-$r$ extension in \eqref{eq:lowrank_factor}
represents the $N\times M$ matrix of spectral responses
by a rank-$r$ factorization, with $r(N+M-r)$
degrees of freedom.
The shared-response estimator in
\eqref{eq:shared_estimator} uses a single
$N$-dimensional response for fixed  weights,
while the cross-size estimator in
\eqref{eq:crosssize_estimator} learns only $Q$
coefficients of a common spectral function,
independently of the graph dimensions.
These parameterizations reflect different tradeoffs between
model expressiveness and the estimated number of coefficients. Lower-dimensional models may reduce estimation
variance at the cost of increased MSE.


%

\newif\ifrationalbasis
\rationalbasistrue   

\vspace{-0.25cm}
\section{Simulations}
\label{sec:simulations}
We evaluate the proposed estimators for \ac{psse}, which is a fundamental problem in power system monitoring and operation \cite{Abur_Gomez_book,Giannakis_Wollenberg_2013}. We address five questions:
(Q1)~What is the price of an unknown topology, and how well do existing approaches cope with it (Subsection~\ref{ssec:price})?
(Q2)~What does each level of the estimator hierarchy contribute (Subsection~\ref{ssec:hierarchy})?
(Q3)~How much training data is needed (Subsection~\ref{ssec:training})?
(Q4)~How much topology information do the measurements carry (Subsection~\ref{ssec:selection})?
(Q5)~Can a spectral response be transferred across network sizes (Subsection~\ref{ssec:transfer})?
The code that reproduces all the figures and tables is available at \url{https://github.com/tirzar/GSP_LMMSE_Topo}.

\vspace{-0.25cm}
\subsection{Setup}
\label{ssec:setup}
\emph{Networks and models:} We use six IEEE test systems from MATPOWER~\cite{zimmerman2011matpower} (Table~\ref{tab:networks}). Each power grid is modeled as a weighted graph whose nodes are buses
(loads and generators) and whose edges represent transmission lines and transformers. The edge weights are derived from the branch series
susceptances using the bus admittance matrix $\Ymat$:
$[\Wmat]_{i,j}=\frac{1}{2}\Im\{Y_{i,j}+Y_{j,i}\}$
for $i\neq j$, with zero diagonal.
The Laplacian is
$\Lmat=\diag(\Wmat\mathbf{1})-\Wmat$.

\input{tables/tab_A_networks}

The state $\xvec$ contains the voltage phase angles, referenced to their mean ($\mathbf{1}^T\xvec=0$), and $\mathbf{y}$ contains the active-power injections at all the buses. Thus, we use the measurement model \eqref{eq:nonlinear_model} with $\mathbf{w}\sim\mathcal{N}(\mathbf{0},\sigma^2\mathbf{I}_N)$ and $\text{SNR}=-10\log_{10}\sigma^2$ [dB], where $\sigma$ is in per unit (p.u.) on a 100-MVA base. By the AC power-flow equations, the measurement function is
\begin{equation}
[\gvec(\mathbf{L}_m,\xvec)]_n=\sum_{k=1}^{N}\Big(G^{(m)}_{n,k}\cos(x_n-x_k)+B^{(m)}_{n,k}\sin(x_n-x_k)\Big),
\label{eq:ac_pf}
\end{equation}
$n=1,\ldots,N$, where $Y^{(m)}_{n,k}=G^{(m)}_{n,k}+jB^{(m)}_{n,k}$ is the $(n,k)$th entry of the bus admittance matrix of topology $m$, and all the voltage magnitudes are set to $1$\,p.u.

We consider two state-generation models:
\begin{itemize}
    \item Model M1 (smoothness prior): $\xvec|m^\star{=}m\sim\mathcal{N}(\mathbf{0},\beta\mathbf{L}_m^\dagger)$ with $\beta=0.1$, the graph-smoothness prior of~\cite{kroizer2021bayesian}, which is defined through the same Laplacian as the measurements.
    \item Model M2 (power-flow states): $\xvec$ solves $[\gvec(\mathbf{L}_m,\xvec)]_n=p_n$ at all the non-slack buses $n$ for random specified injections $p_n$. The loads are scaled by a common factor drawn from $U[0.8,1.2]$ and perturbed independently by Gaussian deviations with a 10\% standard deviation per bus; the generators share the total load in proportion to their base dispatch; and the slack bus balances the losses.
\end{itemize}
Both models yield comparable phase-angle magnitudes
(Table~\ref{tab:networks}).
We use nominal \acp{snr} of $20$ and $40$\,dB for M1
and M2, respectively ($\sigma=10$ and $1$\,MW),
yielding comparable injection-to-noise ratios.

\emph{Candidate topologies:}
For most simulations, we consider the IEEE 118-bus and NE 39-bus systems~\cite{zimmerman2011matpower}.
For each system, the candidate family comprises the
nominal topology and $19$ variants ($M=20$), with
uniform prior probabilities. Each variant is generated
by independently removing existing lines with probability
$0.01$ and adding $A\sim\mathrm{Poisson}(1)$ lines
between nonadjacent buses sharing a neighbor in the
nominal network.
The variants differ from the nominal topology by
1--9 lines (IEEE 118) and 1--6 lines (NE 39).
Despite these limited topology changes, the
index-wise \ac{gft} eigenvector alignment
$\frac{1}{N}\sum_{n=1}^N
\bigl([\Vmat_1]_{:,n}^T[\Vmat_m]_{:,n}\bigr)^2$
can be as low as $0.13$ and $0.21$, respectively, i.e., the candidates have
substantially different graph-frequency bases.
Unless stated otherwise, $T_m=500$ training samples
are used per topology. In Subsections~\ref{ssec:price}--\ref{ssec:selection},
the results are averaged over $20$ independent realizations of the candidate
families, training sets, and test sets. The error bars in the figures are
95\% confidence intervals, computed as $\pm1.96$ standard errors of the
per-realization NMSE in dB.

\input{tables/tab_A_methods}

\emph{Compared estimators (Table~\ref{tab:methods}):}
An estimator is called topology blind if the active topology is not revealed
to it; the data-dependent estimators infer it from $\yvec$.
The robust minimax estimators~\cite{Eldar2005robust,Eldar2004competitive}
use the candidate topologies as the uncertainty set.
We compute their least-favorable mixture weights by
mirror ascent and use the corresponding LMMSE estimator.
The classical multiple-model estimator~\cite{Magill1965,LiJilkov2005} uses
unrestricted LMMSE branches with full-covariance Gaussian likelihoods,
while the proposed variant uses GSP-LMMSE branches with the
graph-stationary surrogate likelihood
$p(\mathbf{y}|m^\star=m)=\mathcal{N}\big(\boldsymbol{\mu}_y^{(m)},
\mathbf{V}_m\mathrm{diag}(\mathbf{d}_{yy}^{(m)})\mathbf{V}_m^T\big)$,
which requires only $N$ variances per candidate and is diagonal in the
GFT domain.
The DNN~\cite{Mestav2019} is trained on pooled data
by minimizing the MSE.
In Subsection~\ref{ssec:hierarchy}, we also include, as an ablation, the
spectral response learned from the pooled data in the nominal basis only
(``single basis $\Vmat_1$'', i.e., Proposition~\ref{prop:shared_blind} with
$\omega_m=\delta[m-1]$).
Sample covariance inversions employ cross-validated
diagonal loading.
Further implementation details regarding the different estimators are provided in the
supplementary material.
Performance is evaluated using the NMSE,
$R(\hat{\xvec})/\mathbb{E}[\|\xvec-\boldsymbol\mu_x\|^2]$, and
the worst-topology conditional NMSE.

\subsection{The Price of an Unknown Topology}
\label{ssec:price}

\begin{figure}[t]
\centering
\includegraphics[width=\columnwidth]{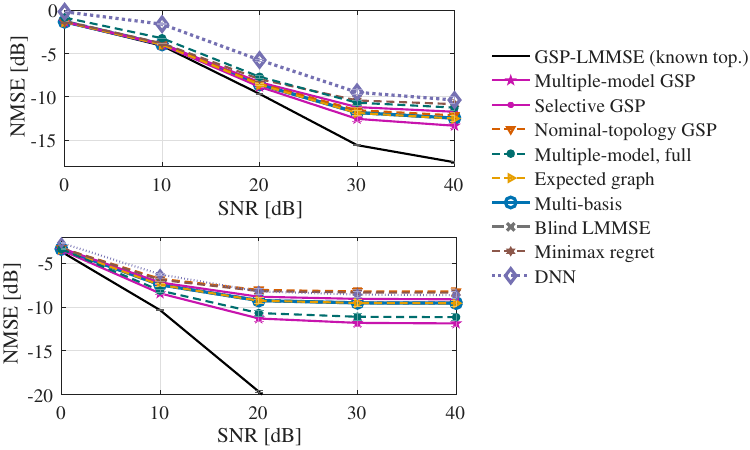}
\vspace{-0.5cm}
\caption{NMSE versus \ac{snr} under topology uncertainty
(M1, $M=20$, 20 realizations): (Top) IEEE 118-bus and
(Bottom) NE 39-bus. Error bars indicate 95\% confidence intervals. On NE 39, the topology-aware benchmark falls below the plotted range above 20 dB.}
\label{fig:price}
\end{figure}

Fig.~\ref{fig:price} shows that the topology-aware GSP-LMMSE
estimator improves with \ac{snr}, whereas the
topology-blind linear estimators exhibit an error floor.
The gap between the topology-aware benchmark and the best
topology-blind linear estimator increases from less than
$0.3$\,dB at $0$\,dB \ac{snr} to $5.0$\,dB (IEEE 118)
and $25.5$\,dB (NE 39) at $40$\,dB.
This is consistent with~\eqref{eq:linear_price},
which characterizes the loss when a single affine
operator serves all candidate topologies.
Multiple-model GSP reduces this loss relative to blind LMMSE
by $0.3$--$0.9$\,dB (IEEE 118) and $2.1$--$2.3$\,dB
(NE 39) at $20$--$40$\,dB.
It can be seen that selective GSP remains below the topology-aware benchmark even at high \ac{snr}. This is because the topology-conditioned measurement distributions
are difficult to distinguish from a single observation,
and the identification rate saturates above $20$\,dB.
Consequently, soft weighting outperforms hard selection
by $1.6$ and $2.8$\,dB at $40$\,dB.

\emph{Linear estimators:}
Table~\ref{tab:comparison} compares the estimators at
$20$\,dB \ac{snr}.
The blind LMMSE, multi-basis, rank-one, and expected-graph
estimators achieve NMSEs within $0.1$\,dB under M1.
The structured estimators require only $N$--$MN$
coefficients, versus $N^2$ for blind LMMSE.
Relative to multi-basis, nominal-topology
GSP~\cite{kroizer2021bayesian} loses $0.1$ and
$1.2$\,dB on average and $0.8$ and $3.2$\,dB
for the worst topology on IEEE 118 and NE 39, respectively.
The minimax estimators
\cite{Eldar2005robust,Eldar2004competitive}
sacrifice $0.6$--$1.1$\,dB in average NMSE to improve
worst-topology NMSE by $0.2$ and $1.2$--$1.3$\,dB.
On IEEE 118, even topology-aware unrestricted LMMSE
underperforms blind LMMSE: it estimates
$N^2=13{,}924$ coefficients from $T_m=500$ samples
per topology, whereas blind LMMSE pools
$MT_m=10^4$ samples.

\input{tables/tab_B_comparison}

\emph{Data-dependent estimators:}
Multiple-model GSP achieves the best NMSE performance among topology-blind estimators.
Relative to blind LMMSE, it improves average NMSE by $0.3$ and $2.1$\,dB across all 20 realizations, and worst-topology NMSE by $0.6$ and $2.8$\,dB. These gains do not require accurate topology identification:
the rates are only $21\%$ and $26\%$, versus
$5\%$ for random selection. Posterior weighting
suppresses unlikely candidates without identifying
the active topology.
Hard selection using the same posterior performs
$0.3$ and $0.4$\,dB worse than multi-basis on average, but improves worst-topology NMSE by $0.5$ and $2.0$\,dB.
The classical full-covariance multiple-model estimator
\cite{Magill1965,LiJilkov2005} performs $1.2$ and
$0.6$\,dB worse than multiple-model GSP, with lower identification rates ($9\%$ and $19\%$), and the DNN~\cite{Mestav2019} is $3.1$\,dB worse on both networks, with a design time of $66$\,s vs. $4$\,ms.

\vspace{-0.25cm}
\subsection{Estimator Hierarchy: Number of Bases and
Perturbation}
\label{ssec:hierarchy}

\begin{figure}[t]
\centering
\includegraphics[width=\columnwidth]{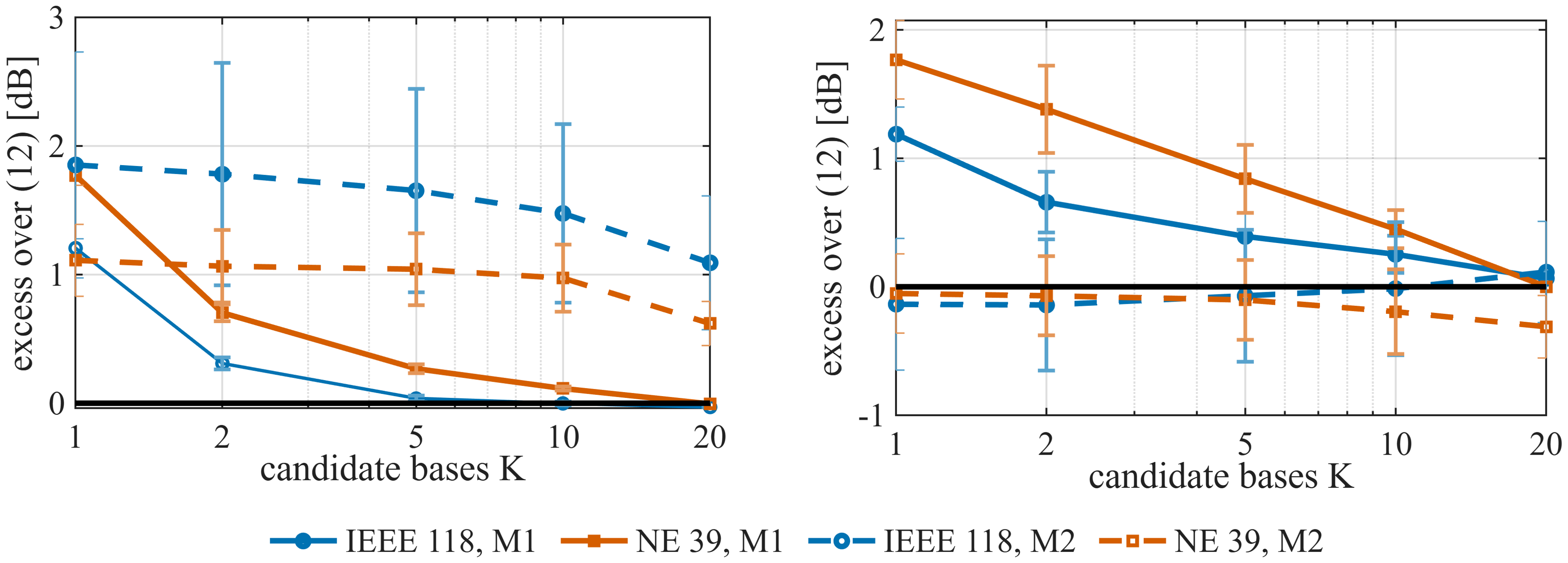}
\caption{Excess NMSE of the multi-basis estimator using
$K$ of the $M=20$ candidates relative to the
topology-blind LMMSE estimator~\eqref{eq:global_lmmse}:
(Left) average over topologies; (Right) worst topology.}
\label{fig:bases}
\end{figure}

\begin{figure}[t]
\centering
\includegraphics[width=\columnwidth]{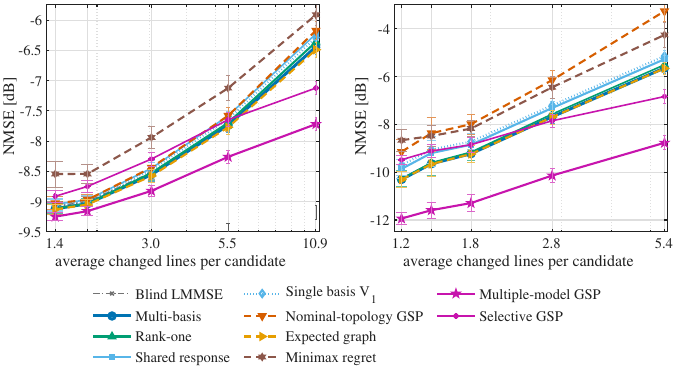}
\caption{NMSE versus the average number of changed lines
per candidate topology (M1, $20$\,dB):
(Left) IEEE 118-bus  and (Right) NE 39-bus.
The omitted topology-aware GSP-LMMSE benchmark attains
$-9.7$ to $-9.2$\,dB and $-19.7$ to $-18.9$\,dB.}
\label{fig:perturbation}
\end{figure}

\emph{Restricted estimator classes:}
The rank-one estimator incurs at most $0.06$\,dB
additional NMSE relative to multi-basis, while the
shared-response estimator incurs $0.1$\,dB (IEEE 118)
and $0.4$\,dB (NE 39) on average, reaching $1.3$\,dB
for the worst topology on NE 39 (Table~\ref{tab:comparison}).
These results are consistent with the nested estimator
classes in~\eqref{eq:single_basis_dominance}
and~\eqref{eq:shared_response_gap}.
Thus, rank-one factorization retains nearly the full
multi-basis performance with only $N+M-1$ coefficients.

\emph{Number of bases:}
Fig.~\ref{fig:bases} examines the effect of restricting
the multi-basis estimator to $K\le M$ candidate bases.
Under M1, increasing $K$ from $1$ to $5$ reduces
the average excess NMSE from $1.2$ to $0.04$\,dB
(IEEE 118) and from $1.8$ to $0.27$\,dB (NE 39).
With all $M=20$ bases, the excess is negligible,
consistent with the projection characterization
in Theorem~\ref{thm:projection}.
The worst-topology excess decreases more slowly.
Under M2, an average excess of $1.1$ and $0.6$\,dB
remains even with all bases, reflecting the restriction
to the multi-basis operator space.
Nevertheless, multi-basis improves upon single-basis
and expected-graph filtering by $0.5$--$0.8$\,dB
on average. Its worst-topology excess is at most $0.12$\,dB and can be negative, since
\eqref{eq:global_lmmse} minimizes the average  MSE.

\emph{Perturbation size:}
Fig.~\ref{fig:perturbation} examines increasing topology
uncertainty by scaling the line-removal probability
and expected number of added lines by $0.25$--$4$.
The resulting candidates differ from the nominal graph
by $1.4$--$10.9$ lines (IEEE 118) and $1.2$--$5.4$
lines (NE 39) on average.
While the topology-aware benchmark varies by less
than $0.8$\,dB, the topology-blind linear estimators
deteriorate substantially. On NE 39, for example,
the multi-basis NMSE increases from $-10.3$ to
$-5.6$\,dB, compared with $-9.1$ to $-3.3$\,dB
for nominal-topology GSP~\cite{kroizer2021bayesian}.
The single-basis estimator, whose response is learned
from pooled data in the nominal basis $\mathbf V_1$,
helps isolate the gains from each level of the hierarchy.
At the baseline perturbation level on NE 39,
learning this response improves upon nominal-topology
GSP by $0.7$\,dB; combining all candidate bases
provides another $0.5$\,dB, and multiple-model GSP
adds $2.1$\,dB through data-dependent weighting.
Moreover, the gain of multiple-model GSP over
multi-basis increases with perturbation size,
from $0.1$ to $1.3$\,dB (IEEE 118) and from
$1.6$ to $3.1$\,dB  (NE 39).
Hard selection also benefits from larger perturbations,
but remains $0.3$--$0.6$ and $1.9$--$2.5$\,dB
worse than soft weighting on the respective networks.

\vspace{-0.25cm}
\subsection{Finite Training Data}
\label{ssec:training}

Fig.~\ref{fig:training} shows the NMSE versus the number
of training samples per topology, $T_m$, for IEEE 118
under M1 and M2. Similar trends are observed on NE 39.

\emph{Structure as regularization (M1):}
The estimators with $O(N)$ coefficients are nearly
insensitive to $T_m$ (Fig.~\ref{fig:training}(left)).
At $T_m=20$, blind LMMSE ($N^2=13{,}924$ coefficients)
is $1.3$\,dB worse than multi-basis and $1.5$\,dB
worse than rank-one and expected-graph filtering,
catching up only for $T_m\ge1,000$.
Topology-aware unrestricted LMMSE also underperforms
the blind structured estimators up to $T_m=500$.
The graph-stationary surrogate improves data efficiency:
the classical multiple-model estimator requires
$T_m=2,000$ to match the NMSE achieved by multiple-model
GSP with $T_m=50$.
Multiple-model GSP performs best among blind estimators
for $T_m\ge100$, whereas rank-one and expected-graph
filtering are preferable with fewer samples.
Selective GSP and the DNN remain $0.3$ and $1.0$\,dB
worse than multi-basis, respectively, even with
$4\times10^4$ pooled samples for the DNN.

\emph{Beyond graph stationarity (M2):}
The price of unknown topology is substantial:
the topology-aware GSP-LMMSE and LMMSE estimators,
omitted from Fig.~\ref{fig:training}(right) for clarity,
attain $-18.8$ and $-20.3$\,dB at $T_m=2,000$,
compared with at best $-5.3$\,dB without topology
knowledge.
Without graph stationarity, topology-aware GSP-LMMSE
trails unrestricted LMMSE by $1.5$\,dB.
Blind LMMSE outperforms multi-basis by $0.7$--$1.1$\,dB,
while multi-basis improves upon single-graph designs
by $0.6$--$0.7$\,dB.
Under M2, topology affects the measurement distribution
primarily through the losses, making the graph-stationary
surrogate likelihood inaccurate. Consequently,
multiple-model and selective GSP perform worst
($-0.9$ to $0.5$\,dB).
With $T_m=2,000$, classical multiple-model estimation,
using unrestricted branches and full-covariance
likelihoods, achieves the lowest NMSE.
Thus, the data-dependent estimators of
Subsection~\ref{subsec:topology_selection} are most effective when the topology is reflected in the graph-spectral covariance of the measurements, as in M1.


\begin{figure}[t]
\centering
\includegraphics[width=\columnwidth]{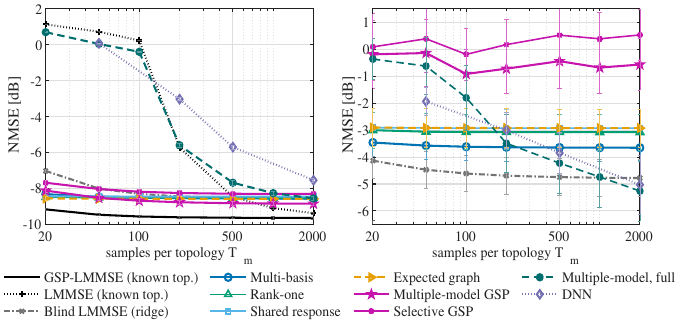}
\caption{NMSE versus the training samples per topology $T_m$ on the IEEE
118-bus system (20 realizations): M1, $20$\,dB (left), and M2, $40$\,dB (right).}
\label{fig:training}
\end{figure}

\vspace{-0.25cm}
\subsection{Topology Information in the Measurements}
\label{ssec:selection}
We now consider $P$ conditionally independent snapshots
$\yvec^{(1)},\ldots,\yvec^{(P)}$ sharing the same active topology.
The likelihood in~\eqref{eq:map_topology_selector} is then replaced by
$\prod_{i=1}^{P}p(\yvec^{(i)}|m^\star=m)$.
The resulting joint likelihood is also used to compute the
posterior weights of the multiple-model estimators, with each
snapshot estimated using the selected or posterior-weighted branches.

Fig.~\ref{fig:selection} presents the results for IEEE 118.
With $T_m=500$, the surrogate-based identification rate increases
from $0.21$ ($P=1$) to $0.68$ ($P=10$) and $0.87$ ($P=50$),
whereas the full-covariance likelihood reaches only $0.39$.
With $T_m=50<N$, reliable estimation of the full covariance
matrices is not possible, and the classical rule remains near chance
($0.08$ at $P=50$). In contrast, the graph-stationary surrogate,
which requires only $N$ variances per candidate, reaches $0.59$
(Fig.~\ref{fig:selection}, left).
Correspondingly, multiple-model GSP improves from $-8.8$ to
$-9.6$\,dB, within $0.03$\,dB of the topology-aware benchmark
(Fig.~\ref{fig:selection}, right).
Soft weighting provides the largest gain for small $P$
($0.5$\,dB over hard selection at $P=1$), when the candidate
topologies are difficult to distinguish. At $P=50$, hard
selection achieves nearly identical NMSE (within $0.01$\,dB)
while also providing a topology decision.
Thus, identification accuracy alone does not determine
estimation performance.
Similar trends are observed on NE 39, with larger gains:
the identification rate increases from $0.26$ to $0.92$,
and multiple-model GSP improves from $-11.3$ to $-19.3$\,dB,
compared with $-19.7$\,dB for the topology-aware benchmark.

\begin{figure}[t]
\centering
\includegraphics[width=\columnwidth]{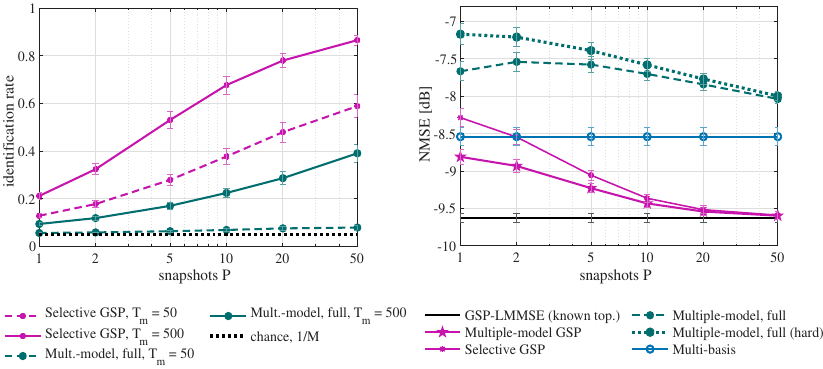}
\vspace{-0.25cm}
\caption{Topology selection using $P$ snapshots  (IEEE 118, M1, $20$\,dB, 20 realizations):
(Left) identification rate of the approximate \ac{map}
rule~\eqref{eq:map_topology_selector} via the graph-stationary
surrogate (selective GSP) and  full-covariance likelihood
\cite{Magill1965}, for  $T_m=50$ (dashed) and $500$ (solid); 
(Right) NMSE with soft weighting and hard selection ($T_m=500$).}
\label{fig:selection}
\end{figure}

\subsection{Transfer Across Network Sizes}
\label{ssec:transfer}

\input{tables/tab_F_loo}

We evaluate the cross-size design of Section~\ref{sec:crosssize} on the six
networks with known topologies (Remark~\ref{rem:spectral_interpretation});
the implementation details are given in the supplementary material. We
compare a Chebyshev dictionary of order $Q=60$ with a stability penalty, on
three frequency axes: absolute, normalized $\lambda/\lambda_{\max}$ as in
ChebNet~\cite{Defferrard2016}, and $\lambda/N$ as in graphon signal
processing~\cite{Ruiz_Chamon_Ribeiro_GSP_2021}; and a rational dictionary
$\phi_q(\lambda)=1/(\lambda+c_q)$ with four fixed poles.
 Under M1, the responses of all networks lie on
a single curve of the \emph{absolute} graph frequency
(Fig.~\ref{fig:collapse}, left). This is expected: since
$\gvec(\mathbf{L},\xvec)\hspace{-0.05cm}\approx\hspace{-0.05cm}\gvec(\mathbf{L},\mathbf{0})+\mathbf{L}\xvec$
and $\mathrm{Var}(\tilde x_n)=\beta/\lambda_n$, the per-frequency gain is
$\beta\hspace{-0.05cm}/\hspace{-0.05cm}(\beta\lambda_n+\sigma^2)$. Under M2, only the
high-frequency part $f(\lambda)\hspace{-0.05cm}\approx\hspace{-0.05cm}1/\lambda$, i.e., the power-flow
inversion, is common (Fig.~\ref{fig:collapse}, right).

\begin{figure}[t]
\centering
\includegraphics[width=0.95\columnwidth]{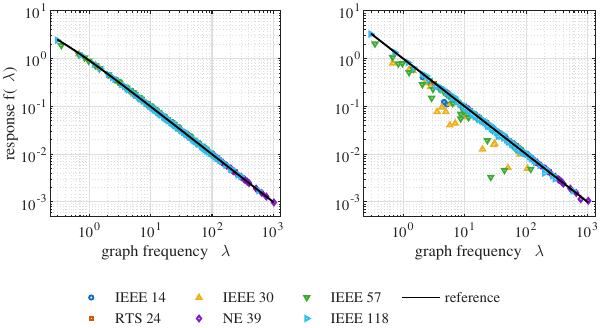}
\vspace{-0.1cm}
\caption{GSP-LMMSE responses~\eqref{eq:crosssize_per_graph_response} of the
six networks: (Left) M1, with the reference $\beta/(\beta\lambda+\sigma^2)$;
(Right) M2, with the reference $1/\lambda$.}
\label{fig:collapse}
\end{figure}

\emph{Transfer to an unseen network (Table~\ref{tab:loo}):} The function is
designed from five networks by Theorem~\ref{thm:crosssize} and applied to
the sixth by~\eqref{eq:unseen_estimator}. Under M1, the Chebyshev function
on the absolute axis transfers with an excess of $0.06$--$1.4$\,dB, except
to NE 39, whose spectrum ($\lambda_{\max}=1{,}025$) exceeds the training
spectra, so that extrapolation costs $12.4$\,dB. The normalized axes used in
the GNN and graphon literature fail for all the targets
($2.1$--$32.9$\,dB), since the graph frequencies of physical networks carry
units. The rational dictionary transfers to all six networks within
$0.12$\,dB: unlike the Chebyshev function, which requires a high order
($Q=60$ at $20$\,dB) and the stability penalty to extrapolate, it matches
the pole of $\beta/(\beta\lambda+\sigma^2)$ with four coefficients. Under
M2, the transferred rational function loses $1.5$--$7.3$\,dB, in line with
Fig.~\ref{fig:collapse} (right).

\emph{Few-shot adaptation:} When the target network has only $T_0$ labeled
samples, transfer is most valuable under M1: with $T_0=2$ and $5$, the
transferred function has an excess of $1.9$ and $0.8$\,dB (averaged over the
six networks), versus $38$ and $2.5$\,dB for a GSP-LMMSE estimator fitted on
the $T_0$ samples and $6.0$ and $1.6$\,dB for a rational fit on the target
alone; for $T_0\ge20$, all the designs are within $0.5$\,dB. Under M2, the
joint design of Subsection~\ref{subsec:sample_xs} reduces the excess of the
transferred coefficients from $3.7$--$5.4$\,dB to $1.9$--$4.7$\,dB, whereas a
rational fit on the target alone reaches $0.9$\,dB with only $T_0=10$. Thus,
the spectral shape transfers even when the coefficients do not.

\vspace{-0.25cm}
\section{Conclusions}
\label{conc_subsec}
\vspace{-0.05cm}

In this paper, we study graph-signal estimation under topology uncertainty and transfer across networks of different sizes. We derive the MSE-optimal multi-basis affine estimator and characterize its excess MSE relative to topology-blind LMMSE. We further develop lower-complexity structured estimators, data-dependent topology selection and weighting methods, and a shared spectral-function design for cross-network transfer. We evaluate these estimators for PSSE in electrical networks.
The simulations show that when topology is reflected in the graph-spectral measurement covariance, posterior weighting of GSP-LMMSE branches achieves the lowest average and worst-topology MSEs among topology-blind estimators. It does not require accurate topology identification and approaches the topology-aware benchmark as more snapshots become available. Multi-basis and rank-one estimators achieve nearly the performance of topology-blind LMMSE with substantially fewer coefficients, particularly with limited training data. When topology affects measurements mainly through the nonlinear observation model, the graph-stationary likelihood becomes inaccurate, favoring topology-blind LMMSE or, with sufficient training data, classical multiple-model estimation. Finally, shared spectral functions, particularly rational functions of absolute graph frequency, enable accurate transfer across network sizes when the optimal spectral responses exhibit common structure, while more heterogeneous responses benefit from target-network adaptation.


\vspace{-0.2cm}
\section*{Acknowledgment}
\vspace{-0.05cm}
The author thanks E. Zeltzer for his collaboration
on the preliminary work~\cite{Zeltzer_Routtenberg_CISS}
and for valuable discussions on the topic.
\appendices

\makeatletter
\def\section{\@startsection{section}{1}{\z@}%
    {1.5ex plus 1.5ex minus .5ex}%
    {0.7ex plus .5ex minus .2ex}%
    {\normalfont\normalsize\centering\scshape}}
\makeatother

\section{Proof of Theorem~\ref{thm:MB}}
\label{AppA}
Substituting \eqref{eq:MB_estimator} into \eqref{eq:bayes_risk}
yields
\begin{equation}
    \Rcal(\hat\xvec_{\rm MB})
    =
    \tr(\Cxx)
    -2\etavec^T\rvecD
    +\etavec^T\SmatD\etavec,
    \label{eq:MB_risk}
\end{equation}
where
$\etavec=[\hvec_1^T,\ldots,\hvec_M^T]^T$.
Since
\beqna
\etavec^T\SmatD\etavec
=
\tr\!\left(
\sum\nolimits_{m=1}^{M}
\Vmat_m\diag(\hvec_m)\Vmat_m^T\Cyy\right.
\hspace{1cm}\nonumber\\\times 
\left.
(\sum\nolimits_{k=1}^{M}
\Vmat_k\diag(\hvec_k)\Vmat_k^T)^T
\right)
\geq 0,
\eeqna
we have $\SmatD\succeq\zerovec$. Furthermore, boundedness of \eqref{eq:MB_risk}
requires $\rvecD\perp\Ncal(\SmatD)$;
otherwise, the objective is unbounded below
along a nullspace direction.
Thus, $\rvecD\in\range(\SmatD)$,
and the minimum-norm solution is
\eqref{eq:eta_solution}.
Substitution into \eqref{eq:MB_risk}
gives \eqref{eq:MB_minrisk}.
Finally, $\Cyy\succ\zerovec$ implies strict convexity of the MSE in
the induced operator $\sum_{m=1}^{M}
\Vmat_m\diag(\hvec_m)\Vmat_m^T$; hence, the optimal operator,
and thus the resulting estimator, is unique.

\vspace{-0.1cm}
\section{Proof of Corollaries~\ref{cor:linear_opt} and~\ref{cor:heterogeneity}}
\subsubsection{Proof of Corollary~\ref{cor:linear_opt}}
\label{App_b}
By Condition 1 and \eqref{eq:mixture_moments}, the between-topology
mean terms in the mixture covariance expressions vanish, yielding
\begin{equation}
    \Cxy=\sum_{m=1}^{M}p_m\Cxym{m},
    \qquad
    \Cyy=\sum_{m=1}^{M}p_m\Cyym{m}.
    \label{eq:common_mean_mixture_cov}
\end{equation}
By Conditions 2 and 3, both $\Cxy$ and $\Cyy$ are diagonalized by
the common eigenbasis $\Vmat$. Since $\Cyy\succ\zerovec$, its inverse
is diagonalized by the same basis, and hence
\beqna
    \Astar
    =\Cxy\Cyy^{-1} \hspace{6cm}\nonumber\\
    =\Vmat
    \left(\sum_{m=1}^{M}p_m\Vmat^T\Cxym{m}\Vmat\right)
    \left(\sum_{m=1}^{M}p_m\Vmat^T\Cyym{m}\Vmat\right)^{-1}
    \Vmat^T .\nonumber
    \label{eq:common_basis_lmmse}
\eeqna
The two matrices inside the parentheses are diagonal by Condition~3;
hence, $\Astar$ is diagonalized by $\Vmat$ and therefore belongs to
the multi-basis operator space in \eqref{iff_A_star}. The result
follows from Theorem~\ref{thm:projection}.

\subsubsection{Proof of Corollary~\ref{cor:heterogeneity}}
\label{App_B2}
Since the conditional means are topology independent,
\eqref{eq:mixture_moments} yields \eqref{eq:common_mean_mixture_cov}.
For a realization $m^\star=m$, the oracle operator in
\eqref{eq:conditional_lmmse} satisfies
$\Amat_{\rm O}^{(m)}
    =
    \Cxym{m}(\Cyym{m})^{-1}$,
and therefore
$\Cxym{m}
    =
    \Amat_{\rm O}^{(m)}\Cyym{m}$.
Substituting this result into
\eqref{eq:global_lmmse} gives
\begin{equation}
    \Astar
    =
    \left(
    \sum\nolimits_{m=1}^{M}p_m
    \Amat_{\rm O}^{(m)}\Cyym{m}
    \right)\Cyy^{-1}.
    \label{eq:Astar_oracle_form}
\end{equation}

Next, define the prior-weighted average of the oracle operators as
$\bar{\Amat}_{\rm O}
    \define
    \sum_{m=1}^{M}p_m\Amat_{\rm O}^{(m)}$.
By \eqref{eq:topology_lmmse_graph_filter},
\begin{equation}
    \bar{\Amat}_{\rm O}
    =
    \sum\nolimits_{m=1}^{M}
    \Vmat_m
    \diag(p_m\hvec_m^{\rm O})
    \Vmat_m^T,
\end{equation}
so $\bar{\Amat}_{\rm O}$ is an admissible and feasible multi-basis operator. Since
the optimal multi-basis operator is the $\Cyy$-weighted projection
of $\Astar$ onto this space, Theorem~\ref{thm:projection} implies 
\begin{equation}
    \Rcal(\hat\xvec_{\rm MB}^\star)
    -\Rcal(\hat\xvec_{\rm L}^\star)
    \leq
    \|\bar{\Amat}_{\rm O}-\Astar\|_{\Cyy}^{2}.
    \label{eq:heterogeneity_intermediate}
\end{equation}

Finally, using \eqref{eq:Astar_oracle_form} and
$\sum_m p_m\Cyym{m}=\Cyy$,
\beqna
\Astar-\bar{\Amat}_{\rm O}
\hspace{-0.05cm}=\hspace{-0.05cm}
\left(\hspace{-0.05cm}
\sum\nolimits_{m=1}^{M}p_m
\Amat_{\rm O}^{(m)}
\big(\Cyym{m}-\Cyy\big)
\right)\hspace{-0.05cm}\Cyy^{-1}.
\label{eq:heterogeneity_operator_gap}
\eeqna
Hence,
$\|\Astar\hspace{-0.05cm}-\hspace{-0.05cm}\bar{\Amat}_{\rm O}\|_{\Cyy}^{2}
\hspace{-0.15cm}=\hspace{-0.05cm}
\Big\|
\sum_{m=1}^{M}p_m
\Amat_{\rm O}^{(m)}
\big(\Cyym{m}-\Cyy\big)
\Cyy^{-1/2}
\Big\|_{\rm F}^{2}$.
Combining this identity with
\eqref{eq:heterogeneity_intermediate} proves
\eqref{eq:heterogeneity_bound}.


\clearpage
\makeatletter
\let\section\@IEEEappendixsavesection   
\gdef\Hy@chapapp{section}
\makeatother
\setcounter{section}{0}
\setcounter{equation}{0}
\setcounter{table}{0}
\setcounter{figure}{0}
\setcounter{algorithm}{0}
\renewcommand{\thesection}{S-\Roman{section}}
\renewcommand{\thesectiondis}{S-\Roman{section}}
\renewcommand{\theequation}{S\arabic{equation}}
\renewcommand{\thetable}{S-\Roman{table}}
\renewcommand{\thefigure}{S\arabic{figure}}
\renewcommand{\thealgorithm}{S\arabic{algorithm}}
\renewcommand{\theHsection}{Supp.\arabic{section}}
\renewcommand{\theHequation}{Supp.\arabic{equation}}
\renewcommand{\theHtable}{Supp.\arabic{table}}
\renewcommand{\theHfigure}{Supp.\arabic{figure}}
\providecommand{\theHalgorithm}{}\renewcommand{\theHalgorithm}{Supp.\arabic{algorithm}}

\twocolumn[{%
\begin{center}
{\Huge Supplementary Material for ``Graph Signal Estimation With Unknown Topology''}\\[0.4cm]
{\large Tirza Routtenberg, \IEEEmembership{Senior Member, IEEE}}
\end{center}
\vspace{0.3cm}}]

This supplement specifies the estimators compared in Section~\ref{sec:simulations}, as implemented in the accompanying code. Equations (S1), (S2), \dots{} are defined here; all other equation, section, and table numbers refer to the paper. 

\vspace{-0.25cm}
\section{Training Data and Common Quantities}
\label{sec:notation}
All estimators are designed from the same labeled training sets $\mathcal D_m$, $m=1,\ldots,M$, of Subsection~\ref{subsec:sample_mb}. The only exception is the nominal-topology estimator of Section~\ref{sec:competing}, which uses $\mathcal D_1$ only. The sample moments conditioned on topology $m$, i.e., $\spmx{m}$, $\spmy{m}$, $\spCxy{m}$, $\spCyy{m}$, and $\spdxy{m}$, $\spdyy{m}$ in
~\eqref{eq:sample_dxy_dyy}, are normalized by $1/T_m$. The empirical mixture moments are given by~\eqref{eq:sample_means}--\eqref{eq:sample_mixture_covariances}. For weights $\nubf$ on the probability simplex $\Delta_M$, we denote by $\widehat{\mubx}(\nubf)$, $\widehat{\muby}(\nubf)$, $\widehat{\Cxy}(\nubf)$, and $\widehat{\Cyy}(\nubf)$ the same quantities with $p_m$ replaced by $\nu_m$, so that $\widehat{\Cxy}=\widehat{\Cxy}(\mathbf p)$. The empirical \ac{mse} of a general affine estimator $\hat\xvec=\Amat\yvec+\bvec$ under topology $m$ is
\begin{eqnarray}
\spMSE_m(\Amat,\bvec)=\tr(\spCxx{m})-2\tr(\Amat\spCxy{m}^T)\nonumber\\
+\tr(\Amat\spCyy{m}\Amat^T)
+\|\spmx{m}-\Amat\spmy{m}-\bvec\|^2,
\label{eq:mse_m}
\end{eqnarray}
and its prior-weighted average is the empirical counterpart of~\eqref{eq:bayes_risk}.

\emph{Regularization.} The inverted $N\times N$ sample covariances use the diagonal loading
\begin{equation}
\Amat=\widehat{\mathbf C}_{xy}\big(\widehat{\mathbf C}_{yy}+\gamma\Imat\big)^{-1},\quad
\gamma=g\,\frac{\tr(\widehat{\mathbf C}_{yy})}{N},
\label{eq:ridge}
\end{equation}
where $g\in\{10^{-8},10^{-7.5},\ldots,10^{0}\}$ is selected by $5$-fold cross-validation of the estimation \ac{mse} on the training data of the corresponding design. The pseudo-inverses of the proposed estimators are computed with a negligible loading $10^{-10}\tr(\cdot)/\dim(\cdot)$.

\vspace{-0.25cm}
\section{Topology-Aware Benchmarks}
\label{sec:aware}
The active topology $m^\star$ is revealed to these estimators.
\emph{1. GSP-LMMSE, known topology}~\cite{S:kroizer2021bayesian}: \eqref{eq:topology_specific_gsp} with $m=m^\star$, the sample means, and the response~\eqref{eq:sample_topology_specific_response}. It uses $N$ coefficients per topology.
\emph{2. LMMSE, known topology}: the oracle estimator~\eqref{eq:conditional_lmmse} with the sample moments of topology $m^\star$, where the inverse is replaced by the loading~\eqref{eq:ridge} applied to $\mathcal D_{m^\star}$. It uses $N^2$ coefficients per topology.

\section{Proposed Estimators}
\label{sec:proposed}

These estimators use the empirical mixture moments, and $\widehat{\SmatD}$ and $\widehat{\rvecD}$ of Subsection~\ref{subsec:sample_mb}. We denote the $(m,\ell)$th $N\times N$ block of $\widehat{\SmatD}$ by $\Kmat_{m\ell}$, and the $m$th $N\times1$ block of $\widehat{\rvecD}$ by $\mathbf r_m$.

\emph{Multi-basis} (Theorem~\ref{thm:MB}): \eqref{eq:opt_est} with the coefficients~\eqref{eq:sample_eta}, i.e., $MN$ coefficients. In Subsection~\ref{ssec:hierarchy}, the dictionary of $K<M$ bases uses the corresponding blocks of $\widehat{\SmatD}$ and $\widehat{\rvecD}$, and the results are averaged over random subsets of the bases.

\emph{Shared response} (Proposition~\ref{prop:shared_blind}): \eqref{eq:shared_estimator} with $\omegavec=\mathbf p$ and the response~\eqref{eq:shared_solution}, computed from the empirical moments, i.e., $N$ coefficients. The \emph{single-basis} estimator of Subsection~\ref{ssec:hierarchy} is the same design with $\omegavec=\mathbf e_{m_0}$.

\emph{Rank-one} (Remark~\ref{rem:lowrank} with $r=1$): $\Hmat=\hvec\bvec^T$ in~\eqref{eq:lowrank_factor}, i.e., $\hvec_m=b_m\hvec$, so that $\etavec=\bvec\otimes\hvec$ in~\eqref{eq:sample_MB_quadratic}. Substituting $\etavec$ into~\eqref{eq:sample_MB_quadratic} gives two equivalent forms of the empirical \ac{mse},
\beqna
\label{eq:r1}
\widehat\Rcal(\hvec,\bvec)
\nonumber\\
=\tr(\widehat\Cxx)-2\hvec^T\sum_{m}b_m\mathbf r_m+\hvec^T\Big(\sum_{m,\ell}b_mb_\ell\Kmat_{m\ell}\Big)\hvec
\label{eq:r1_h}\\
=\tr(\widehat\Cxx)-2\bvec^T\mathbf q(\hvec)+\bvec^T\Gmat(\hvec)\bvec,
\label{eq:r1_b}
\eeqna
where $[\Gmat(\hvec)]_{m\ell}=\hvec^T\Kmat_{m\ell}\hvec$ and $[\mathbf q(\hvec)]_m=\hvec^T\mathbf r_m$. 
For fixed $\bvec$, \eqref{eq:r1_h} is minimized by the shared response~\eqref{eq:shared_solution} with $\omegavec=\bvec$. For fixed $\hvec$, \eqref{eq:r1_b} is minimized by $\bvec=\Gmat(\hvec)^{\dagger}\mathbf q(\hvec)$. Algorithm~\ref{alg:r1} alternates these two steps. Each step does not increase $\widehat\Rcal$, and the normalization $\mathbf 1^T\bvec=1$ does not change it, since $(\hvec/c)(c\bvec)^T=\hvec\bvec^T$. Thus, $\widehat\Rcal$ is nonincreasing over the iterations. Since the first step with $\bvec=\mathbf p$ yields the shared-response estimator, and the rank-one class is contained in the multi-basis class, the training \ac{mse} of the rank-one estimator lies between those of the multi-basis and shared-response estimators. Unlike in~\eqref{eq:shared_estimator}, the weights $\bvec$ are not constrained to be nonnegative. The estimator has $N+M-1$ free coefficients.

\begin{algorithm}[t]
\caption{Rank-one estimator (alternating least squares)}
\label{alg:r1}
\begin{algorithmic}[1]
\Require $\Kmat_{m\ell}$, $\mathbf r_m$ for $m,\ell=1,\ldots,M$; prior $\mathbf p$
\State $\bvec\gets\mathbf p$
\Repeat
\State $\hvec\gets\big(\sum_{m,\ell}b_mb_\ell\Kmat_{m\ell}\big)^{\dagger}\sum_m b_m\mathbf r_m$ \Comment{\textcolor{gray}{\eqref{eq:shared_solution}, $\omegavec=\bvec$}}
\State $\bvec\gets\Gmat(\hvec)^{\dagger}\mathbf q(\hvec)$, \ $\bvec\gets\bvec/(\mathbf 1^T\bvec)$
\Until{$\|\Delta\bvec\|\le10^{-8}\|\bvec\|$ or $50$ iterations}
\State Recompute $\hvec$ as in line 3
\Ensure $\Amat=\sum_m b_m\Vmat_m\diag(\hvec)\Vmat_m^T$
\end{algorithmic}
\end{algorithm}

\emph{Topology-selective and multiple-model GSP} (Subsection~\ref{subsec:topology_selection}). The branches $\hat\xvec_{\rm GSP}^{(m)}(\yvec)$ are~\eqref{eq:topology_specific_gsp} with~\eqref{eq:sample_topology_specific_response}. The likelihood in~\eqref{eq:map_topology_selector} is replaced by the graph-stationary Gaussian surrogate $\yvec|m^\star{=}m\sim\Ncal\big(\spmy{m},\Vmat_m\diag(\spdyy{m})\Vmat_m^T\big)$, which is diagonal in the GFT domain. Since $\Vmat_m$ is orthogonal, its log-determinant is $\sum_n\log[\spdyy{m}]_n$, and its quadratic form is a weighted sum of the squared GFT coefficients of $\yvec-\spmy{m}$. Hence, it is evaluated in $O(N^2)$ operations per candidate, without matrix inversion. For $P$ snapshots $\yvec^{(1)},\ldots,\yvec^{(P)}$ that share the topology ($P=1$ except in Subsection~\ref{ssec:selection}), the approximate posterior probabilities are
\begin{align}
\log\pi_m={}&\log p_m-\frac{P}{2}\sum_{n=1}^{N}\log[\spdyy{m}]_n\nonumber\\
&-\frac12\sum_{p=1}^{P}\sum_{n=1}^{N}\frac{\big[\Vmat_m^T(\yvec^{(p)}-\spmy{m})\big]_n^2}{[\spdyy{m}]_n}+c,
\label{eq:post_gsp}
\end{align}
where $c$ ensures $\sum_m\pi_m=1$. The topology-selective (hard) estimator applies $\hat\xvec_{\rm GSP}^{(\hat m)}$ with $\hat m=\arg\max_m\pi_m$, i.e.,~\eqref{eq:map_topology_selector} with the surrogate likelihood. The \emph{multiple-model} (soft) estimator is
\begin{equation}
\hat\xvec_{\rm MM}(\yvec)=\sum_{m=1}^{M}\pi_m\,\hat\xvec_{\rm GSP}^{(m)}(\yvec),
\label{eq:soft}
\end{equation}
a plug-in approximation of the topology-blind MMSE estimator~\eqref{eq:mmse_unknown}. Both estimators use $2MN$ coefficients ($\widehat\hvec_{{\rm GSP},m}$ and $\spdyy{m}$).

\section{Competing Topology-Blind Estimators}
\label{sec:competing}

\emph{Topology-blind LMMSE}: \eqref{eq:global_lmmse} with the empirical mixture moments and $\widehat{\Cyy}^{\dagger}$ in place of $\Cyy^{-1}$ ($N^2$ coefficients). In Subsection~\ref{ssec:training}, its regularized version applies~\eqref{eq:ridge} to the pooled, prior-weighted training data.

\emph{Nominal-topology GSP-LMMSE}~\cite{S:kroizer2021bayesian}: the branch $\hat\xvec_{\rm GSP}^{(1)}$ of the nominal topology, designed from $\mathcal D_1$ only and applied to all the data. This represents a design that ignores the topology uncertainty, and differs from the single-basis estimator of Section~\ref{sec:proposed}, which uses the basis $\Vmat_1$ but the mixture moments.

\emph{Expected-graph filter}~\cite{S:Isufi2017random}: the mixture-optimal graph filter in the eigenbasis $\bar\Vmat=[\bar{\mathbf v}_1,\ldots,\bar{\mathbf v}_N]$ of $\bar\Lmat=\sum_m p_m\Lmat_m$,
\begin{equation}
\hat\xvec=\widehat{\mubx}+\bar\Vmat\diag(\bar\hvec)\bar\Vmat^T(\yvec-\widehat{\muby}),\quad
[\bar\hvec]_n=\frac{\bar{\mathbf v}_n^T\widehat{\Cxy}\bar{\mathbf v}_n}{\bar{\mathbf v}_n^T\widehat{\Cyy}\bar{\mathbf v}_n},
\label{eq:expected}
\end{equation}
which uses $N$ coefficients.

\emph{Minimax MSE}~\cite{S:Eldar2005robust} and \emph{minimax regret}~\cite{S:Eldar2004competitive}, with the candidate family as the uncertainty set:
\begin{equation}
(\Amat,\bvec)=\argmin_{\Amat,\bvec}\ \max_{m=1,\ldots,M}\big(\spMSE_m(\Amat,\bvec)-c_m\big).
\label{eq:minimax}
\end{equation}
Here, $c_m=0$ for the minimax MSE estimator, and for the minimax regret estimator, $c_m$ is~\eqref{eq:mse_m} evaluated at the topology-aware LMMSE estimator of Section~\ref{sec:aware}, so that $\spMSE_m-c_m$ is the excess \ac{mse} over the oracle of topology $m$.

\emph{Saddle point.} The maximum of $M$ numbers equals the maximum of their convex combinations, so~\eqref{eq:minimax} is
\begin{equation}
\begin{split}
&\min_{\Amat,\bvec}\ \max_{\nubf\in\Delta_M}\sum_{m=1}^{M}\nu_m\big(\spMSE_m(\Amat,\bvec)-c_m\big).
\end{split}
\label{eq:F}
\end{equation}
Since the objective function in \eqref{eq:F} is a convex quadratic in $(\Amat,\bvec)$, linear in $\nubf$, and $\Delta_M$ is compact and convex, Sion's minimax theorem allows the min and max to be exchanged. For fixed $\nubf$, $\sum_m\nu_m\spMSE_m$ is the empirical counterpart of~\eqref{eq:bayes_risk} with the prior $\mathbf p$ replaced by $\nubf$, and the constants $c_m$ do not affect the minimizer. Hence, the inner minimizer is the topology-blind LMMSE estimator~\eqref{eq:global_lmmse} of the $\nubf$-mixture,
\begin{equation}
\Amat(\nubf)=\widehat{\Cxy}(\nubf)\widehat{\Cyy}(\nubf)^{\dagger},\quad
\bvec(\nubf)=\widehat{\mubx}(\nubf)-\Amat(\nubf)\widehat{\muby}(\nubf).
\label{eq:saddle}
\end{equation}
The outer problem maximizes $g(\nubf)\triangleq 
\sum_{m=1}^{M}\nu_m\big(\spMSE_m(\Amat(\nubf),\bvec(\nubf)-c_m\big)$ over $\Delta_M$. The function $g$ is concave, as a pointwise minimum of functions that are linear in $\nubf$. When $\widehat{\Cyy}(\nubf)\succ\mathbf 0$, the minimizer~\eqref{eq:saddle} is unique, and Danskin's theorem gives the gradient
\begin{equation}
\partial g/\partial\nu_m=\spMSE_m\big(\Amat(\nubf),\bvec(\nubf)\big)-c_m.
\label{eq:danskin}
\end{equation}
The minimax estimator is~\eqref{eq:saddle} at the least-favorable prior $\nubf^\star=\argmax_{\nubf\in\Delta_M}g(\nubf)$. By the optimality conditions on the simplex, all the topologies with $\nu_m^\star>0$ attain the maximal value of $\spMSE_m-c_m$. Thus, the estimator equalizes the worst-case loss over the hardest topologies: the least observable ones for the minimax MSE criterion, and those farthest from their oracle for the minimax regret criterion.

\emph{Algorithm.} We maximize $g$ by mirror ascent with the entropy potential, i.e., the exponentiated-gradient update with the gradient~\eqref{eq:danskin}, which keeps $\nubf$ on the simplex (Algorithm~\ref{alg:minimax}). Both estimators use $N^2$ coefficients.

\begin{algorithm}[t]
\caption{Minimax MSE / regret estimator (mirror ascent)}
\label{alg:minimax}
\begin{algorithmic}[1]
\Require Sample moments of all the topologies; $c_m$, $m=1,\ldots,M$
\State $\nubf\gets\mathbf 1/M$
\Repeat
\State Compute $\Amat,\bvec$ by~\eqref{eq:saddle}
\State $e_m\gets\spMSE_m(\Amat,\bvec)-c_m$, $m=1,\ldots,M$ \\
\Comment{\textcolor{gray}{gradient~\eqref{eq:danskin}}}
\State $\nubf\gets\nubf\odot\exp(\tau\,\mathbf e)$, \ $\nubf\gets\nubf/(\mathbf 1^T\nubf)$ \Comment{\textcolor{gray}{step size $\tau>0$}}
\Until{$g(\nubf)$ converges}
\State Compute $\Amat,\bvec$ by~\eqref{eq:saddle}
\Ensure $\Amat,\bvec$ \Comment{\textcolor{gray}{$\hat\xvec=\Amat\yvec+\bvec$}}
\end{algorithmic}
\end{algorithm}

\emph{Multiple-model, full LMMSE}~\cite{S:Magill1965,S:LiJilkov2005}: the classical multiple-model estimator. It is identical to~\eqref{eq:post_gsp}--\eqref{eq:soft}, but with unrestricted branches and full Gaussian likelihoods,
\begin{equation}
\begin{split}
\hat\xvec^{(m)}(\yvec)&=\spmx{m}+\Amat_m(\yvec-\spmy{m}),\\
\yvec|m^\star{=}m&\sim\Ncal\big(\spmy{m},\spCyy{m}+\gamma_m\Imat\big),
\end{split}
\label{eq:mm_full}
\end{equation}
where $\Amat_m$ and $\gamma_m$ are obtained from~\eqref{eq:ridge} with $\mathcal D_m$. In the likelihood, $\gamma_m$ is bounded below by $10^{-10}\tr(\spCyy{m})/N$ for numerical stability. The hard and soft rules are as in Section~\ref{sec:proposed}. The estimator uses about $\tfrac32MN^2$ coefficients ($\Amat_m$ and the symmetric $\spCyy{m}$).

\begin{table*}[t]
\centering
\caption{Compared estimators: definition, topology information at test time, structure, and number of learned coefficients.}
\label{tab:summary}
\renewcommand{\arraystretch}{1.1}
\begin{tabular}{@{}lllll@{}}
\toprule
Estimator & Definition & Topology at test time & Structure & Coefficients\\
\midrule
GSP-LMMSE, known topology~\cite{S:kroizer2021bayesian} & \eqref{eq:topology_specific_gsp}, $m=m^\star$ & revealed & graph filter of $\Lmat_{m^\star}$ & $MN$\\
LMMSE, known topology & \eqref{eq:conditional_lmmse}, \eqref{eq:ridge} & revealed & unrestricted, per topology & $MN^2$\\
\midrule
Multi-basis (Theorem~\ref{thm:MB}) & \eqref{eq:opt_est}, \eqref{eq:sample_eta} & none & $M$ graph filters & $MN$\\
Shared response (Proposition~\ref{prop:shared_blind}) & \eqref{eq:shared_estimator}, $\omegavec=\mathbf p$ & none & common response & $N$\\
Rank-one (Remark~\ref{rem:lowrank}) & \eqref{eq:lowrank_factor}, Alg.~\ref{alg:r1} & none & $\Hmat=\hvec\bvec^T$ & $N+M-1$\\
Single basis $\Vmat_{m_0}$ & \eqref{eq:shared_estimator}, $\omegavec=\mathbf e_{m_0}$ & none & graph filter of $\Lmat_{m_0}$ & $N$\\
Topology-selective / multiple-model GSP & \eqref{eq:post_gsp}, \eqref{eq:soft} & inferred & GSP branches, surrogate likelihood & $2MN$\\
\midrule
Topology-blind LMMSE & \eqref{eq:global_lmmse} & none & unrestricted & $N^2$\\
Nominal-topology GSP-LMMSE~\cite{S:kroizer2021bayesian} & \eqref{eq:topology_specific_gsp}, $m=1$ & assumes $m=1$ & graph filter of $\Lmat_1$ & $N$\\
Expected graph~\cite{S:Isufi2017random} & \eqref{eq:expected} & none & graph filter of $\bar\Lmat$ & $N$\\
Minimax MSE~\cite{S:Eldar2005robust} / regret~\cite{S:Eldar2004competitive} & \eqref{eq:minimax}, \eqref{eq:saddle}, Alg.~\ref{alg:minimax} & none & unrestricted & $N^2$\\
Multiple-model, full~\cite{S:Magill1965,S:LiJilkov2005} & \eqref{eq:mm_full} & inferred & full branches and likelihoods & $\approx\frac32MN^2$\\
DNN~\cite{S:Mestav2019} & \eqref{eq:dnn} & none & MLP $N$--$H$--$H$--$N$ & $2HN+H^2+2H+N$\\
\bottomrule
\end{tabular}
\end{table*}

\emph{Deep neural network}~\cite{S:Mestav2019}: a fully connected network that approximates $\mathrm E[\xvec|\yvec]$, trained on the pooled data of all the candidates without topology labels,
\beqna
\hat\xvec=\mathbf m_x+
\Dmat_x\hspace{5.75cm}\nonumber\\\times \Big(\Wmat_3\rho\big(\Wmat_2\rho(\Wmat_1\Dmat_y^{-1}(\yvec-\mathbf m_y)+\bvec_1)
+\bvec_2\big)+\bvec_3\Big)\hspace{-0.05cm},
\label{eq:dnn}
\eeqna
where $\rho(\cdot)=\max(\cdot,0)$, the two hidden layers have $H=256$ units, and $\mathbf m_x,\mathbf m_y$ and the diagonal $\Dmat_x,\Dmat_y$ are the entrywise means and standard deviations of the training data. The loss is the empirical \ac{mse} of the standardized states plus the weight decay $10^{-5}\sum_i\|\Wmat_i\|_F^2$. Training uses Adam~\cite{S:Kingma2015} (learning rate $10^{-3}$, mini-batches of $128$), He initialization~\cite{S:He2015}, and early stopping on a $15\%$ validation split (patience $15$, at most $200$ epochs). Since $T_m$ is equal for all $m$ and the prior is uniform, the pooled loss is the empirical counterpart of~\eqref{eq:bayes_risk}. The network has $2HN+H^2+2H+N$ parameters.

\vspace{-0.25cm}
\section{Cross-Network Designs}
\label{sec:transfer}

In Subsection~\ref{ssec:transfer}, each group $j$ contains one known graph (Remark~\ref{rem:spectral_interpretation}), and the coefficients are the minimizer of the weighted curve fit~\eqref{eq:curvefit_risk} with the empirical responses and powers.
For each network, we use $2\times10^4$ training and $5\times10^3$ test
samples, and the results are averaged over five independent data sets.
As in Remark~\ref{rem:spectral_interpretation}, the DC term $n=1$ ($\lambda_{j,1}=0$) is excluded, and the penalty $\sum_q\kappa_q\alpha_q^2$ is added to~\eqref{eq:curvefit_risk}. The learned function is applied to the held-out network by~\eqref{eq:unseen_estimator}. The compared dictionaries are:
\begin{itemize}
\item \emph{Chebyshev}: $\phi_q(\lambda)=T_q(\lambda/s_j-1)$, $Q=60$, and $\kappa_q=10^{-7}q^4$, on three axes: absolute, $s_j=\max_i\lambda_{\max,i}/2$ for all $j$; normalized as in ChebNet~\cite{S:Defferrard2016}, $s_j=\lambda_{\max,j}/2$; and graphon-normalized~\cite{S:Ruiz_Chamon_Ribeiro_GSP_2021}, $s_j=N_j\max_i(\lambda_{\max,i}/N_i)/2$, i.e., an axis proportional to $\lambda/N_j$.
\item \emph{Rational}: $\phi_q(\lambda)=1/(\lambda+c_q)$, with four poles $c_q$ log-spaced in $[10^{-2},10^{2}]$ and $\kappa_q=10^{-8}$.
\item \emph{Linearized response} (no learning): $h(\lambda)=\beta/(\beta\lambda+\sigma^2)$, the Wiener response of the linearized model M1.
\end{itemize}
In the few-shot experiment, $T_0$ labeled samples of the held-out network are drawn at random from its training set ($50$ draws for each $T_0$ and data set), and four designs are compared: the GSP-LMMSE estimator~\eqref{eq:sample_topology_specific_response} fitted on the $T_0$ samples ($N$ coefficients); the rational function fitted on the $T_0$ samples; the rational function transferred from the other five networks, with the target means estimated from the $T_0$ samples; and the rational function fitted jointly on the six groups by~\eqref{eq:sample_crosssize_alpha}, where the target group contributes its $T_0$ samples. The excess \ac{mse} is measured relative to the GSP-LMMSE estimator designed on the full training set of the held-out network, and averaged over the six networks.

\vspace{-0.25cm}
\section{Summary}
Table~\ref{tab:summary} summarizes the compared estimators. The means are additionally estimated and are not counted as coefficients.


\end{document}

%% file: myshorts.tex
\newcommand{\avec}{{\bf{a}}}

\newcommand{\bvec}{{\bf{b}}}

\newcommand{\yvec}{{\bf{y}}}

\newcommand{\wvec}{{\bf{w}}}
\newcommand{\xvec}{{\bf{x}}}

\newcommand{\svec}{{\bf{s}}}

\newcommand{\gvec}{{\bf{g}}}

\newcommand{\hvec}{{\bf{h}}}

\newcommand{\etavec}{{\bf{\eta}}}

\newcommand{\onevec}{{\bf{1}}}
\newcommand{\zerovec}{{\bf{0}}}

\newcommand{\alphavec}{{\bf{\alpha}}}

\newcommand{\Lambdamat}{{\bf{\Lambda}}}

\newcommand{\Amat}{{\bf{A}}}

\newcommand{\Bmat}{{\bf{B}}}
\newcommand{\Cmat}{{\bf{C}}}
\newcommand{\Dmat}{{\bf{D}}}

\newcommand{\Gmat}{{\bf{G}}}
\newcommand{\Hmat}{{\bf{H}}}

\newcommand{\Imat}{{\bf{I}}}
\newcommand{\Lmat}{{\bf{L}}}
\newcommand{\Kmat}{{\bf{K}}}

\newcommand{\Umat}{{\bf{U}}}
\newcommand{\Vmat}{{\bf{V}}}
\newcommand{\Wmat}{{\bf{W}}}

\newcommand{\Ymat}{{\bf{Y}}}

\newcommand{\define}{\stackrel{\triangle}{=}}





\def\alphavec{{\mbox{\boldmath $\alpha$}}}

\def\etavec{{\mbox{\boldmath $\eta$}}}

\def\alphavecsmall{{\mbox{\boldmath {\scriptsize $\alpha$}}}}

\DeclareMathOperator*{\argmin}{argmin}
\DeclareMathOperator*{\argmax}{argmax}

\newcommand{\diag}{\mathrm{diag}}

\newcommand{\be}{\begin{equation}}
\newcommand{\ee}{\end{equation}}
\newcommand{\beqna}{\begin{eqnarray}}
\newcommand{\eeqna}{\end{eqnarray}}

%% file: Preamble.tex
\definecolor{NewColor}{rgb}{0,0,0}

\acrodef{pmf}[pmf]{probability mass function}
\acrodef{pdf}[pdf]{probability density  function}
\acrodef{psd}[PSD]{positive-semidefinite}
\acrodef{evd}[EVD]{eigenvalue decomposition}
\acrodef{gft}[GFT]{graph Fourier transform}
\acrodef{gsp}[GSP]{Graph signal processing}
\acrodef{fir}[FIR]{finite impulse response}
\acrodef{ot}[OT]{optimal-transport}

\acrodef{dnn}[DNN]{deep neural network}
\acrodef{map}[MAP]{maximum a-posteriori probability}
\acrodef{snr}[SNR]{signal-to-noise ratio}
\acrodef{mse}[MSE]{mean-squared-error}
\acrodef{amse}[AMSE]{averaged MSE}
\acrodef{mmse}[MMSE]{minimum MSE}

\acrodef{gsplmmse}[GSP-LMMSE]{GSP-linear minimum mean-squared-error}
\acrodef{gnn}[GNN]{graph neural network}
\acrodef{arma}[ARMA]{auto-regressive moving average}
\acrodef{ista}[ISTA]{iterative soft thresholding algorithm}
\acrodef{lista}[LISTA]{learned \ac{ista}}
\acrodef{lmmse}[LMMSE]{linear MMSE}
\acrodef{dsp}[DSP]{digital signal processing}
\acrodef{gso}[GSO]{graph shift operator}
\acrodef{wrt}[w.r.t.]{with respect to}
\acrodef{KKT}[KKT]{Karush-Kuhn-Tucker}
\acrodef{psse}[PSSE]{power system state estimation}
\acrodef{gsp}[GSP]{Graph signal processing}
\acrodef{gft}[GFT]{graph Fourier transform}
\acrodef{ghpf}[GHPF]{Graph High Pass Filter}
\acrodef{gso}[GSO]{graph shift operator}
\acrodef{sdgs}[SDGS] {sparse diffused graph signal}
\acrodef{wrt}[w.r.t.]{with respect to}
\acrodef{arma}[ARMA]{autoregressive moving average}
\acrodef{iid}[i.i.d.]{independent and identically distributed}

%% file: tables/tab_A_networks.tex
\begin{table}[t]
\centering
\caption{Test systems: buses, transmission lines, extreme nonzero graph frequencies, and the root-mean-square (RMS) phase (state) angle, $\sqrt{\mathrm{E}[\|\mathbf{x}\|^2]/N}$, under models M1 and M2.}
\label{tab:networks}
\footnotesize
\setlength{\tabcolsep}{3pt}
\begin{tabular}{@{}lrrrrrr@{}}
\hline
& & & & & \multicolumn{2}{c}{RMS angle [$^\circ$]}\\
System & $N$ & Lines & $\lambda_2$ & $\lambda_{\max}$ & M1 & M2\\
\hline
IEEE 14 & 14 & 20 & 2.07 & 60 & 6.4 & 5.2\\
IEEE RTS 24 & 24 & 38 & 3.22 & 243 & 3.9 & 10.4\\
IEEE 30 & 30 & 41 & 0.67 & 105 & 8.0 & 4.5\\
New England (NE) 39 & 39 & 46 & 5.12 & 1,025 & 2.8 & 5.5\\
IEEE 57 & 57 & 80 & 0.35 & 119 & 8.8 & 4.6\\
IEEE 118 & 118 & 186 & 0.30 & 580 & 6.1 & 7.8\\
\hline
\end{tabular}
\end{table}

%% file: tables/tab_A_methods.tex
\begin{table}[t]
\centering
\caption{Compared estimators, their number of learned
coefficients, and design principles. All estimators
use the same labeled training data (Section~V).}
\label{tab:methods}
\fontsize{7.75}{9}\selectfont
\setlength{\tabcolsep}{1pt}
\begin{tabular}{@{}llcl@{}}
\hline
Estimator & Ref. & Coeff. & Principle\\
\hline
\multicolumn{4}{l}{\emph{Topology aware (benchmarks)}}\\
GSP-LMMSE (known top.)
& \cite{kroizer2021bayesian} & $MN$
& active-topology filter\\
LMMSE (known top.)
& (14) & $MN^2$
& unrestricted, per topology\\

\multicolumn{4}{l}{\emph{Proposed (topology blind)}}\\
Multi-basis
& Thm.~\ref{thm:MB} & $MN$
& filters in all $M$ bases\\
Rank-one
& Rem.~\ref{rem:lowrank} & $N{+}M{-}1$
& $\Hmat=\hvec\bvec^T$, learned $\bvec$\\
Shared response
& Prop.~\ref{prop:shared_blind} & $N$
& common response, $\boldsymbol\omega=\mathbf p$\\
Selective GSP
& \eqref{eq:map_topology_selector} & $2MN$
& approximate-MAP selection\\
Multiple-model GSP
& \eqref{eq:40} & $2MN$
& posterior-weighted branches\\

\multicolumn{4}{l}{\emph{Competing (topology blind)}}\\
Blind LMMSE
& \eqref{eq:global_lmmse} & $N^2$
& unrestricted operator\\
Nominal-topology GSP
& \cite{kroizer2021bayesian} & $N$
& nominal-topology filter\\
Expected graph
& \cite{Isufi2017random} & $N$
& filter of $\bar{\mathbf L}$\\
Minimax MSE
& \cite{Eldar2005robust} & $N^2$
& minimizes $\max_m\mathrm{MSE}_m$\\
Minimax regret
& \cite{Eldar2004competitive} & $N^2$
& minimizes maximum regret\\
Multiple-model, full
& \cite{Magill1965,LiJilkov2005}
& $\frac{3}{2}MN^2$
& full covariances (hard/soft)\\
DNN
& \cite{Mestav2019} & $2HN{+}H^2$
& MLP, $N$--$H$--$H$--$N$\\
\hline
\end{tabular}
\end{table}

%% file: tables/tab_B_comparison.tex
\begin{table}[t]
\centering
\caption{Estimator comparison under topology uncertainty (M1, $20$\,dB SNR, $M=20$, uniform prior): mean and worst-topology NMSE [dB] over $20$ realizations (95\% confidence half-widths of the mean ${\le}\,0.40$\,dB), and design time on the IEEE 118-bus system. Proposed estimators are in bold.}
\label{tab:comparison}
\footnotesize
\setlength{\tabcolsep}{3pt}
\begin{tabular}{@{}lccccc@{}}
\hline
 & \multicolumn{2}{c}{IEEE 118-bus} & \multicolumn{2}{c}{NE 39-bus} & Time\\
Estimator & Mean & Worst & Mean & Worst & [s]\\
\hline
\multicolumn{6}{@{}l}{\emph{Topology aware (benchmarks)}}\\
GSP-LMMSE (known top.) & $-9.64$ & $-8.36$ & $-19.66$ & $-17.67$ & $0.0039$\\
LMMSE (known top.) & $-8.47$ & $-7.15$ & $-19.32$ & $-17.32$ & $0.68$\\
\multicolumn{6}{@{}l}{\emph{Topology blind, linear}}\\
Blind LMMSE & $-8.53$ & $-6.34$ & $-9.26$ & $-4.93$ & $0.0045$\\
\textbf{Multi-basis} & $-8.56$ & $-6.29$ & $-9.26$ & $-4.92$ & $0.53$\\
\textbf{Shared response} & $-8.45$ & $-5.90$ & $-8.89$ & $-3.65$ & $0.082$\\
\textbf{Rank-one} & $-8.53$ & $-6.07$ & $-9.20$ & $-4.67$ & $0.72$\\
Nominal-topology GSP & $-8.45$ & $-5.52$ & $-8.04$ & $-1.74$ & $0.00035$\\
Expected graph & $-8.58$ & $-6.37$ & $-9.27$ & $-4.94$ & $0.0044$\\
Minimax MSE & $-7.62$ & $-6.45$ & $-8.14$ & $-6.12$ & $1.2$\\
Minimax regret & $-7.94$ & $-6.47$ & $-8.18$ & $-6.26$ & $1.9$\\
\multicolumn{6}{@{}l}{\emph{Topology blind, nonlinear (data dependent)}}\\
\textbf{Selective GSP} & $-8.30$ & $-6.77$ & $-8.82$ & $-6.94$ & $0.0043$\\
\textbf{Multiple-model GSP} & $-8.83$ & $-6.97$ & $-11.31$ & $-7.76$ & $0.0041$\\
Multiple-model, full (hard) & $-7.18$ & $-5.69$ & $-8.76$ & $-6.63$ & $0.67$\\
Multiple-model, full & $-7.67$ & $-6.01$ & $-10.69$ & $-7.33$ & $0.67$\\
DNN & $-5.73$ & $-4.65$ & $-8.21$ & $-5.27$ & $66$\\
\hline
\end{tabular}
\end{table}

%% file: tables/tab_F_loo.tex
\begin{table*}[t]
\centering

\caption{Leave-one-network-out spectral transfer under M1
(SNR $=20$\,dB) and M2 (SNR $=40$\,dB).
The ``Own'' columns give the topology-aware GSP-LMMSE
NMSE; all other columns give the excess MSE relative
to this benchmark (dB).
The Wiener baseline uses the fixed response
$\beta/(\beta\lambda+\sigma^2)$.
Results are averaged over five realizations.}
\label{tab:loo}
\vspace{-0.15cm}
\footnotesize
\setlength{\tabcolsep}{3pt}
\begin{tabular}{l|rrrrrr|rrrrr}
\hline
& \multicolumn{6}{c}{M1, SNR $=20$\,dB}
& \multicolumn{5}{c}{M2, SNR $=40$\,dB}\\
Target network
& Own & Cheb. & Cheb.\ $\lambda/\lambda_{\max}$
& Cheb.\ $\lambda/N$ & Rational & Wiener
& Own & Cheb. & Cheb.\ $\lambda/\lambda_{\max}$
& Cheb.\ $\lambda/N$ & Rational\\
& NMSE & abs. & \cite{Defferrard2016}
& \cite{Ruiz_Chamon_Ribeiro_GSP_2021}
& & (lin.)
& NMSE & abs. & \cite{Defferrard2016}
& \cite{Ruiz_Chamon_Ribeiro_GSP_2021} & \\
\hline
IEEE 14
& -14.73 & $0.46$ & $13.50$ & $12.37$ & $0.00$ & $0.01$
& -16.62 & $2.53$ & $15.73$ & $12.08$ & $1.45$\\
IEEE RTS 24
& -18.05 & $0.06$ & $14.92$ & $13.16$ & $0.00$ & $0.00$
& -28.75 & $10.86$ & $25.78$ & $23.13$ & $7.32$\\
IEEE 30
& -10.92 & $0.26$ & $8.87$ & $6.57$ & $0.01$ & $0.02$
& -11.37 & $0.87$ & $9.46$ & $4.95$ & $2.41$\\
New England 39
& -20.00 & $12.39$ & $28.85$ & $32.94$ & $0.01$ & $0.00$
& -29.97 & $5.62$ & $37.23$ & $17.88$ & $3.63$\\
IEEE 57
& -8.26 & $0.58$ & $6.33$ & $2.05$ & $0.07$ & $0.14$
& -8.19 & $3.56$ & $6.47$ & $4.97$ & $1.48$\\
IEEE 118
& -9.81 & $1.39$ & $6.64$ & $12.86$ & $0.12$ & $0.01$
& -15.93 & $12.60$ & $13.93$ & $16.05$ & $5.89$\\
\hline
\end{tabular}
\end{table*}